\crefname{algocf}{alg.}{algs.}
\Crefname{algocf}{Algorithm}{Algorithms}
\newtheorem{theorem}{Theorem}
\newtheorem{lemma}[]{Lemma}
\newtheorem{conjecture}[]{Conjecture}
\newtheorem{definition}[]{Definition}
\crefname{conjecture}{conjecture}{conjectures}
\Crefname{conjecture}{Conjecture}{Conjectures}
\newcommand{\blambda}{\bm{\lambda}}
\newcommand{\tblambda}{\tilde{\bm{\lambda}}}
\newcommand{\tbblambda}{\tilde{\mathbb{\bblambda}}}
\newcommand{\sblambda}{\blambda^\star}
\newcommand{\bLambda}{\bm{\Lambda}}
\newcommand{\loss}{{\ell_{\bm{p}}}}
\newcommand{\ploss}{\ell_{\bm{p}'}}
\newcommand{\bloss}{\bm{\ell}_{\bm{p}}}
\begin{document}
%
\title{Solar Photovoltaic Systems Metadata Inference and Differentially Private Publication}


\author{
Nikhil~Ravi,~\IEEEmembership{Graduate Student Member,~IEEE},
Anna~Scaglione,~\IEEEmembership{Fellow,~IEEE},
Julieta Giraldez,~\IEEEmembership{Member,~IEEE},
Parth Pradhan,~\IEEEmembership{Senior Member,~IEEE},
Chuck Moran,
Sean~Peisert,~\IEEEmembership{Senior Member,~IEEE}
\thanks{N. Ravi and A. Scaglione are with the Department of Electrical and Computer Engineering, Cornell Tech, e-mail: nr337@cornell.edu.
J. Giraldez, P. Pradhan, and Chuck Moran are with Kevala.
S. Peisert is with Lawrence Berkeley National Laboratory.
}
\thanks{
The Director, Cybersecurity, Energy Security, and Emergency Response, Cybersecurity for Energy Delivery Systems program, of the U.S. Department of Energy, under contract DE-AC02-05CH11231 supported this research.  Any opinions, findings, conclusions, or recommendations expressed in this material are those of the authors and do not necessarily reflect those of the sponsors of this work.}%
}

\IEEEtitleabstractindextext{%
	\begin{abstract}
		Stakeholders in electricity delivery infrastructure are amassing data about their system demand, use, and operations. Still, they are reluctant to share them, as even sharing aggregated or anonymized electric grid data risks the disclosure of sensitive information. This paper highlights how applying differential privacy to distributed energy resource production data can preserve the usefulness of that data for operations, planning, and research purposes without violating privacy constraints. Differentially private mechanisms can be optimized for queries of interest in the energy sector, with provable privacy and accuracy trade-offs, and can help design differentially private databases for further analysis and research. In this paper, we consider the problem of inference and publication of solar photovoltaic systems' metadata. Metadata such as nameplate capacity, surface azimuth and surface tilt may reveal personally identifiable information regarding the installation behind-the-meter. We describe a methodology to infer the metadata and propose a mechanism based on Bayesian optimization to publish the inferred metadata in a differentially private manner. The proposed mechanism is numerically validated using real-world solar power generation data.
	\end{abstract}

	\begin{IEEEkeywords}
		Smart grids, Solar Photovoltaic systems, Differential Privacy, Metadata Inference, Bayesian Optimization.
	\end{IEEEkeywords}}

\maketitle
\IEEEdisplaynontitleabstractindextext
\IEEEpeerreviewmaketitle

\section{Introduction}
\IEEEPARstart{U}{tilities} and distributed energy resource (DER) providers are significantly expanding their data collection capabilities and are undergoing data-driven, digital transformations of their organizations~\cite{currie2023data}. Concurrently, there is also significant interest in accessing this data by third parties; for example:
\begin{itemize}[leftmargin=*]
	\item Organizations planning optimal placements for DER sites.
	\item Commercial industries seeking breakthroughs for the digital economy in the energy sector.
	\item Law enforcement agencies investigating cyber-crimes on these infrastructures.
    \item Regulatory bodies working to encourage decarbonization and grid modernization goals \cite{unitedstatesdepartmentofenergyofficeofelectricity2020strategy}.
\end{itemize}

However, the growing number of measurements collected in distribution systems has exacerbated data privacy issues~\cite{liu2012cyber}. The data can reveal user behavior, expose retail market vulnerabilities (such as the times of congestion in the system), or expose company proprietary information. Simultaneously, the analysis of this valuable data enables necessary transformations in the energy sector, such as developing technologies for responsive loads in buildings and electrified transportation or highlighting market opportunities to spur investments in deploying renewable energy. Furthermore, the tie-lines of the physical grid allow failures to spread across management boundaries; thus, sharing information across such management zones and with government and regulatory agencies could help develop stronger defensive postures, raise alerts about imminent or ongoing attacks, and support recovery from successful attacks.

The prevalent vision of the grid includes a trusted curator, as seen in \cref{fig:trusted_curator}. The curator may be the electric utility itself or a contractor. Utilities inherently silo data, making sharing very difficult, out of concern that sharing certain data can implicitly violate the privacy of customers. In addition, the default operational security paradigm for the past two decades has been to protect utility grids for national security reasons. While regulators are requiring utilities to share more data, there is still a fear -- among both the consumers and the operators -- of sharing data for privacy reasons~\cite{lamm2021data}.
\begin{figure}[!htbp]
	\centering
	\includegraphics[width=0.45\textwidth]{./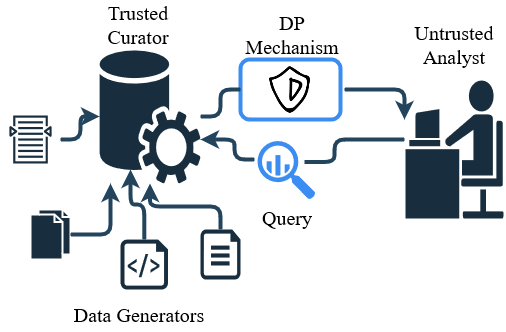}
	\caption{Data collection with a trusted central curator.}
	\label{fig:trusted_curator}
\end{figure}

\subsection{Differential privacy for measured energy data}
The electric utility must abide by strict laws about customer privacy while publishing data or aggregate query answers to external agents. We highlight how applying differential privacy (DP) to DER production data can preserve the usefulness of that data for operations, planning, and research purposes without violating privacy constraints. DP mechanisms can be optimized for queries of interest in the energy sector, with provable privacy and accuracy trade-offs, and can help design differentially private databases for further analysis and research.

DP can be loosely defined as a controlled amount of statistical ``noise'' introduced to data, intended to protect aspects of data that could otherwise be used to re-identify information associated with it. This noise shifts numerical data within a definable threshold such that, while the data no longer truly represents original measurements (making the data harder to trace to sensitive personally identifiable information (PII)), it is still representative enough to be useful for an intended analytical purpose.

Unlike anonymization techniques, DP mechanisms can provide mathematical guarantees on the amount of information about a dataset that can be leaked to a data analyst or any other party who receives it. Using DP in the context of an analyst studying a database allows for the provision of a privacy budget to limit the maximum information leakage over a set of queries sent to that database; the DP outputs will contain approximate statistical answers to queries and analyses that can be optimized for utility and acceptable privacy leakage.

General DP mechanisms are practical and readily available, creating an opportunity to analyze data relevant to cybersecurity for energy delivery systems while preserving privacy. For these systems to capitalize on this opportunity, the application of DP mechanisms to energy datasets must be analyzed. Successfully applying DP mechanisms to energy datasets can lead to increased comfort for energy stakeholders (consumers, utilities, equipment vendors, and government regulators) with the use of their data for various analytical and planning purposes, which could result in significant increases in data sharing while protecting privacy, trade secrets, and other sensitive information in the data.

\subsection{Privacy needs for Solar Photovoltaics' production data and related contributions}\label{sec:SolarPV}
The rapid expansion in the adoption of solar photovoltaics (PV) poses new challenges in the operation of power systems. High solar adoption can disrupt the supply-demand balance on the grid due to the increased need for electricity generators to quickly ramp up energy production when the sun sets and the contribution from PV falls. The unique change in the shape of the electric load met by conventional power plants coupled with increasing levels of PV generation in the system is now commonly referred to as the \textit{duck curve}. The duck curve highlights concerns that the conventional power system will be unable to accommodate the ramp rate and range needed to fully utilize solar energy. Another challenge with high solar adoption is the potential for PV to produce more energy than can be used at one time, called over-generation ~\cite{phuangpornpitak2016study,frew2021curtailment}. This leads system operators to curtail PV generation, reducing its economic and environmental benefits. As such, the ability to predict the net load forecast (load minus solar generation) is becoming crucial to the safe and reliable operation of the grid.

Tracking the performance and statistical forecasting of PV systems' power generation is essential to a utility's ability to perform control operations to ensure the stability of the grid, for PV performance and forecasting activities, and to detect faults to the systems~\cite{meng2020datadriven}. This information is also important for the further penetration of solar PV systems by showcasing the profitability of the adoption of PV systems. The availability of accurate metadata about the various distributed solar PV systems in the field is necessary to perform the aforementioned analyses. This includes information about the PV systems' nameplate capacity, surface azimuth, and surface tilt parameters. However, this is sensitive information, which when revealed could enable various market attacks.

The Federal Energy Regulatory Commission (FERC) approved Order No. 2222 in September 2020~\cite{ferc2020new}. This landmark decision requires the regional transmission organizations and independent system operators that manage the transmission grids providing electricity to about two-thirds of the country to figure out how to give DERs access to wholesale energy markets. It promotes competition in electric markets by removing the barriers preventing DERs from competing on a level playing field in the organized capacity, energy, and ancillary services markets run by regional grid operators. The capability of DERs to participate in markets will also increase the creation and participation of DER aggregators, and the estimation of residential and commercial PV will become of growing importance and prone to cybersecurity and privacy concerns~\cite{americanpublicpowerassociation2018comments}.

The implementation of smart meters has increased concern about consumer privacy because the fine-grained meter data being collected could be used to infer the activities and consumption and production behavior patterns of consumers. These data could be shared between utility companies, third parties and other stakeholders to improve the planning and operations of the grid, as well as the way customers and DERs interact with it. A significant number of actors including DER aggregators, Microgrid operators, OEMs, vendors, and other third parties will be interested in inferring the PV production connected to the primary and secondary side of the distribution feeders. Utilities will only collect the net demand from a smart meter. Larger systems in the hundreds of kilowatts (kW) to megawatt (MW) scale directly connected to the distribution grid (versus BTM solar PV) will more commonly be monitored by utilities via SCADA systems.

On the other hand, some DER data may be inaccessible to researchers and modelers forecasting demand. Consider a rooftop PV system: this system is installed BTM, so granular data may not be reported or available to an electric utility, which typically only has knowledge of the electrical characteristics up to and including the electricity meter on a home (described as front-of-the-meter, or FTM). More than likely, the homeowner will interact with the PV system through a proprietary application developed by a separate company responsible for manufacturing, installing, or maintaining the equipment (nominally the inverter). Typically, the utility is not tied into this monitoring ecosystem (and the production data therein), but aggregating and modeling PV sizing and production data is essential for stable grid operations and proper dispatch of generation to meet demand at scale.

Most commonly in the U.S., residential and commercial PV systems sized at or beneath $10$ kW - $30$ kW have a simplified application process~\cite{bird2018review} that may not capture the PV characteristics required to model such DERs. As more of these systems are connected to the grid, modeling the power they generate is of growing importance for issues ranging from energy markets and dispatch to grid power flow control. The estimation and forecast of PV power are made difficult as only a minority of systems continuously report their generation in a publicly accessible manner. Knowledge of PV system characteristics is required in the different regional PV modeling approaches used to reconstruct the missing power measurements~\cite{killinger2018search}.

Overall, regardless of scale, a careful assessment of PV generation potential, trends, and performance in a local area is a prerequisite to designing, tuning, and operating PV-based utility systems confidently and effectively. Recreating PV production data can also become a key activity for the areas where PV data available to the utility is scarce. In an effort to address these needs, utilities can identify BTM and FTM PVs already installed in a given area and measure PV generation. Utilities may also request that PV owners allow them to place PV meters on systems identified at strategic grid interconnection locations. The metadata inferred from the measurements obtained from such meters may reveal the generation patterns of the owners, as well as the neighboring sites with similar PV system properties. These measurement systems are an extension of the utility's field telemetry and can become indistinguishable from SCADA.

There are several ways to collect PV output and metadata depending on the type of PV system, the specific needs of the user, and if the utility can acquire special permission to collect the PV data directly. As previously mentioned, an obvious source of PV production data is the smart meter. Smart meters measure the net loads from which PV generation profiles can be extracted. One caveat of using smart meter data for any inferential analysis of the PV metadata is that the data needs to be disaggregated first.  The other monitoring sources that may readily provide sufficient time series data for such purposes are PV meters, data loggers, and smart inverters. PV meters ~\cite{montano-martinez2021detailed} and data loggers ~\cite{kipp2018solar} can be installed in the electrical panel of a PV system to gather instantaneous generation data produced by solar panels. Data loggers can also measure the temperature and humidity that characterize the outputs. This can be used to regularly track the performance of the PV system over time~\cite{zipp2018when}. smart inverters with built-in data logging capabilities have been proven to be successful in closely monitoring PV performance and the operating and technical factors that influence PV production.

This paper aims to study a DP randomized mechanism that would allow a utility to publish solar PV metadata to untrusted third-party analysts while protecting customer privacy. Using a solar power generation dataset, the paper presents an algorithm that utilizes solar generation in congruence with public local solar irradiation patterns to perform DP Bayesian optimization to infer metadata such as surface azimuth, surface tilt, and nameplate capacity.

The rest of the paper is organized as follows.  \Cref{sec:Preliminaries} describes the solar metadata inference problem and proposes a grid-search-based mechanism to infer the metadata. In  \Cref{sec:Methodology}, we propose Bayesian optimization (BO) as a tool to scale the inference of the metadata and introduce DP and propose a DP BO-based mechanism to publish the inferred solar PV metadata.  \Cref{sec:Numericals} discusses the numerical results and performance of the proposed mechanism. Finally, we conclude in \Cref{sec:Conclusion}.

\section{Solar PV Metadata Inference}\label{sec:Preliminaries}

\begin{figure}
	\centering
	\includegraphics[width=0.45\textwidth]{./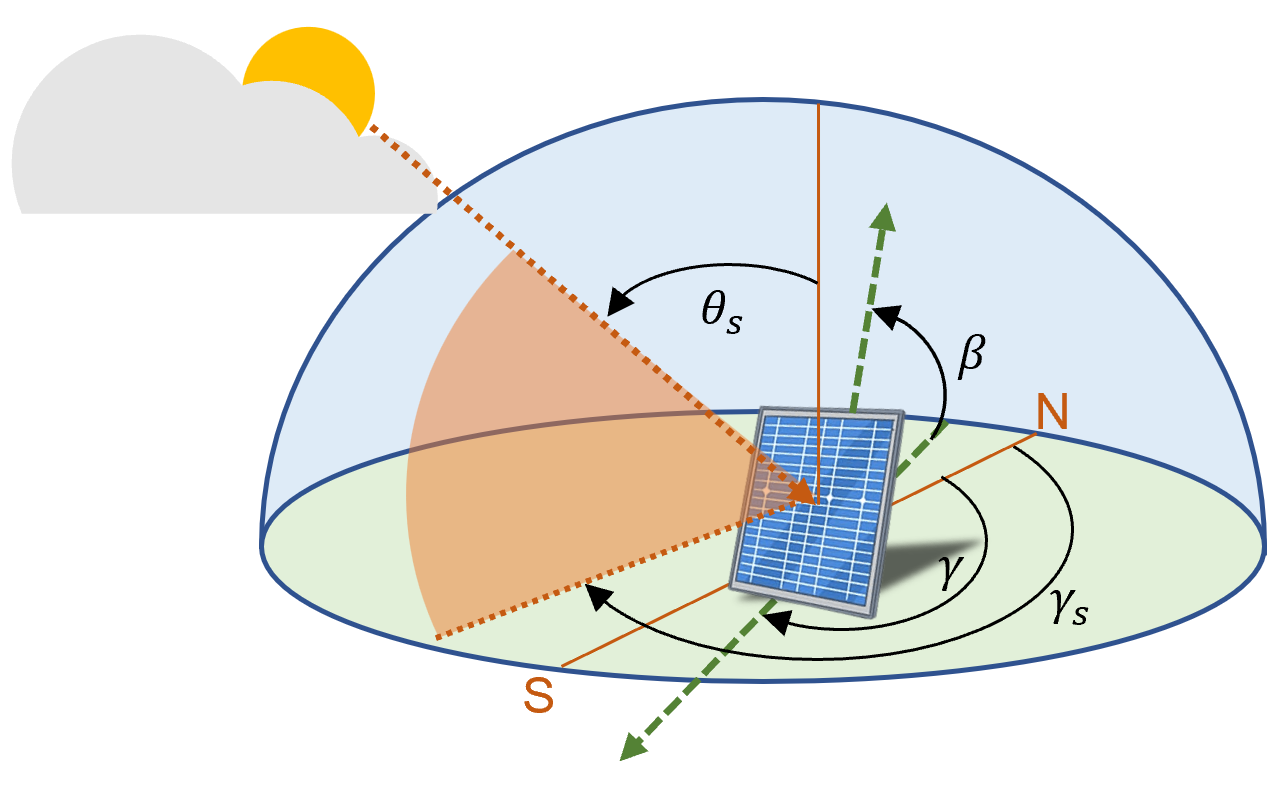}
	\caption[Metadata Illustration]{Illustration of the various angles involved in modeling solar power generation.}
	\label{fig:metadata}
\end{figure}
The availability of accurate metadata about the various distributed solar PV systems in the field is essential to perform grid operations and analyses. This metadata includes the following aspects of a PV system:
\begin{itemize}[leftmargin=*]
	\item \textit{\underline{\textbf{Nameplate capacity:}}} The amount of power the panel can output while it is running in ideal conditions over some duration.
	\item \textit{\underline{\textbf{Surface azimuth:}}} The angle the solar panels are facing; it is measured in a clockwise direction from the north -- i.e., a panel facing north (respectively east, south, and west) is said to have a surface azimuth of $0\degree$ (respectively $90\degree$, $180\degree$, and $270\degree$).
	\item \underline{\textbf{\textit{Surface tilt:}}} The angle the solar panel makes with the horizontal; it should be set to a value between $0\degree$ and $90\degree$.
\end{itemize}
These parameters are illustrated in \cref{fig:metadata} where the surface azimuth and tilt are denoted by $\gamma$ and $\beta$, respectively. 

\subsection{Solar Power generation as a function of metadata}\label{sec:solar_generation}
In this section, we will briefly describe a prevalent solar generation model by NREL~\cite{dobos2014pvwatts} and present preliminary data analysis that indicates the sensitive nature of a PV system's nameplate capacity, surface azimuth and surface tilt.

For a fixed-tilt solar panel, the angle of incidence, $\theta$, of solar rays is given by:
\begin{equation}
	\theta=\cos^{-1}\left(\sin\theta_s \cos(\gamma-\gamma_s)\sin\beta+\cos\theta_s cos\beta\right) \label{eq:ang_incidence}
\end{equation}
where $\theta_s, \gamma, \gamma_s$ and $\beta$ are the solar zenith, surface azimuth, solar azimuth, and surface tilt angles, respectively~\cite{dobos2014pvwatts}. The transmitted irradiance on the plane-of-array $I_t[k]$ at time $k = 0,\ldots , n-1$, is:
\begin{equation}
	I_t[k]=f(\theta)\cos\theta I_n[k]+I_{ds}[k]+I_{dg}[k]\label{eq:poa}
\end{equation}
where $I_n[k], I_{ds}[k], I_{dg}[k]$ are the normal, sky diffuse, and ground diffuse irradiance, respectively, and $f(\theta)$ is an attenuation factor that depends on the characteristics of the glass in the panel.
Refer to \cref{fig:metadata} for an illustration of the various angles discussed here.

Finally, the AC power generated by the panel is given by:
\begin{equation}
	p[k] = \zeta\overline{p}(1+\kappa(\tau_c[k]-\tau_r)) I_t[k],\label{eq:acpower}
\end{equation}
where $\overline{p}$ is the \textit{nameplate capacity} of the panel, $\kappa$ is a temperature coefficient, $\tau_c$ and $\tau_r$ are the cell and reference temperatures of the panel, respectively, and $\zeta$ represents the other confounders not explicitly modeled. From \crefrange{eq:ang_incidence}{eq:acpower}, we establish the relationship between the metadata (consisting of nameplate capacity and surface azimuth and tilt) and the solar power generated. \Cref{eq:acpower} also establishes that the power generated is directly proportional to the global transmitted on the plane-of-array.
In addition to the dependence on the surface azimuth and tilt, various other seasonal influencers such as temperature, relative air mass, wind speed and direction, rainfall, the strength of the irradiance, etc., also affect solar power generation. A few of these confounders are modeled explicitly in \cref{eq:acpower}, while the rest are lumped into the multiplicative $\zeta$ factor.

This model naturally leads to the following conjectures:
\begin{conjecture}
	The plane-of-array irradiance is strongly correlated to the solar power generated by the PV system.
\end{conjecture}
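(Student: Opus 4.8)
The plan is to argue the conjecture from two directions: \emph{analytically}, by reading the structural dependence already encoded in \crefrange{eq:ang_incidence}{eq:acpower}, and \emph{empirically}, by reporting the sample correlation on the real generation traces used in \Cref{sec:Numericals}.

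First I would start from \cref{eq:acpower},
\begin{equation}
	p[k] = \zeta\,\overline{p}\,\bigl(1+\kappa(\tau_c[k]-\tau_r)\bigr)\,I_t[k],
\end{equation}
and isolate the multiplicative prefactor $g[k] := \zeta\,\overline{p}\,\bigl(1+\kappa(\tau_c[k]-\tau_r)\bigr)$. The key point is that $g[k]$ lives in a narrow band: $\overline{p}$ is a constant, $\kappa$ is small (a fraction of a percent per degree), and $\tau_c[k]-\tau_r$ stays within a modest daily range, so $g[k] = \bar g\,(1+\varepsilon[k])$ with $|\varepsilon[k]| \ll 1$. Substituting $p[k] = \bar g\,I_t[k] + \bar g\,\varepsilon[k]\,I_t[k]$ into the Pearson coefficient $\rho(p, I_t) = \operatorname{Cov}(p, I_t)/(\sigma_p\,\sigma_{I_t})$, the leading term reproduces $\operatorname{Var}(I_t)$ exactly and yields $\rho = 1$; the remainder is $O(\sigma_\varepsilon)$, and $O(\sigma_\varepsilon^2)$ under the mild assumption that $\varepsilon[k]$ is approximately mean-zero and only weakly correlated with $I_t[k]$ (and note that the temperature term, which does track irradiance, if anything couples $p$ and $I_t$ in the same direction). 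This makes ``strongly correlated'' precise in the clear-sky, unclipped regime.

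Then, since $I_t[k]$ is not itself metered --- it is exactly the quantity carrying the unknown metadata $(\overline{p},\gamma,\beta)$ through \crefrange{eq:ang_incidence}{eq:poa} --- I would validate the claim empirically with the plane-of-array series reconstructed from public irradiance data at nominal angles (or co-located GHI measurements) as a proxy $\widehat{I}_t[k]$, and report the distribution, across sites and across days, of the sample Pearson and Spearman correlations between $\{p[k]\}$ and $\{\widehat{I}_t[k]\}$; the model above predicts a tight cluster near one.

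The hard part will be the tails rather than the clean case: on partly cloudy days the lumped confounder $\zeta$ (shading, soiling, and --- crucially --- inverter clipping near the nameplate limit) becomes strongly time-varying and can be correlated with $I_t$ in a way that is neither small nor sign-definite; clipping in particular caps $p[k]$ while $I_t[k]$ keeps rising, which mechanically depresses $\rho$. I therefore expect to either (i) state the rigorous $O(\sigma_\varepsilon^2)$ bound only on the unclipped clear-sky portion of each trace, or (ii) phrase the conjecture in the averaged sense and push clipping and heavy shading into a residual term, relying on \Cref{sec:Numericals} to show that this residual is empirically negligible for the PV systems considered.
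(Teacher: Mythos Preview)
The paper does not prove this statement; it is deliberately labeled a \emph{conjecture}. The entirety of the paper's own support is (i) the sentence ``This model naturally leads to the following conjectures,'' placed immediately after \crefrange{eq:ang_incidence}{eq:acpower}, and (ii) the empirical illustrations in \cref{fig:az_tilt} and \cref{fig:peak_daily}, introduced with ``The aforementioned conjectures are further demonstrated in \ldots''. There is no formal argument to compare your proposal against.

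That said, your two-pronged plan is a faithful elaboration of exactly the informal reasoning the paper gestures at: the analytical step (isolating the slowly varying prefactor in \cref{eq:acpower} and arguing $\rho(p,I_t)\approx 1$ up to $O(\sigma_\varepsilon)$) makes precise what the paper means by ``naturally leads to,'' and your empirical step mirrors the paper's use of figures. Your treatment is strictly more rigorous than the paper's own, and your caveats about clipping and time-varying $\zeta$ are sound and go beyond anything the paper addresses. Just be aware that, in the paper's logic, this remains a modeling conjecture used to \emph{motivate} the inference objective in \cref{eq:objective:solar_pv_metadata_inference}, not a lemma in the DP analysis, so a fully rigorous bound is not required for the downstream results.
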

\begin{conjecture}\label{conj:metadata}
	Any changes in surface azimuth and tilt result in a significant change in solar power generation.
\end{conjecture}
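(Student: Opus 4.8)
The plan is to establish \cref{conj:metadata} by combining an analytic sensitivity argument drawn from the generation model of \crefrange{eq:ang_incidence}{eq:acpower} with a corroborating empirical study on the real-world dataset. The starting observation is that in this model the surface azimuth $\gamma$ and tilt $\beta$ enter the power output $p[k]$ \emph{only} through the angle of incidence $\theta$ of \cref{eq:ang_incidence}, and from there only through the direct-beam term $f(\theta)\cos\theta\, I_n[k]$ of the plane-of-array irradiance in \cref{eq:poa}; the diffuse components $I_{ds}[k]$ and $I_{dg}[k]$ are, to first order, orientation-independent. Hence one can write $p[k] = c[k]\,\big(f(\theta)\cos\theta\, I_n[k] + I_{ds}[k] + I_{dg}[k]\big)$ with an orientation-free, slowly varying coefficient $c[k] = \zeta\,\overline{p}\,(1+\kappa(\tau_c[k]-\tau_r))$, which isolates the entire $(\gamma,\beta)$-dependence into a single scalar nonlinearity.

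First I would differentiate through the chain $(\gamma,\beta)\mapsto\theta\mapsto p[k]$. Setting $\mu(\gamma,\beta)=\cos\theta=\sin\theta_s\cos(\gamma-\gamma_s)\sin\beta+\cos\theta_s\cos\beta$, one has $\partial_\gamma\mu = -\sin\theta_s\sin(\gamma-\gamma_s)\sin\beta$ and $\partial_\beta\mu = \sin\theta_s\cos(\gamma-\gamma_s)\cos\beta-\cos\theta_s\sin\beta$, so that $\partial_\gamma p[k] = c[k]\,I_n[k]\,\tfrac{d}{d\mu}\!\big(f(\arccos\mu)\,\mu\big)\,\partial_\gamma\mu$ and likewise for $\beta$. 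The incidence-angle modifier $f$ is smooth and close to $1$ on the operating range, so $\tfrac{d}{d\mu}(f\mu)$ is bounded away from $0$ whenever the panel actually sees the sun; I would therefore lower-bound $|\partial_\gamma p[k]|$ and $|\partial_\beta p[k]|$ away from a ``degenerate set'' of configurations (sun at the zenith, panel already normal to the beam, beam blocked so $\cos\theta\le 0$, or $I_n[k]=0$). Summing over a daylight window $k=0,\dots,n-1$, I would show that the daily-energy map $E(\gamma,\beta)=\sum_k p[k]$ is non-constant with a quantifiable spread --- e.g., a south-facing array tilted near the site latitude versus a north-facing or flat array differs by an order-one relative amount --- which is the precise sense in which the change is ``significant.''

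The corroborating step is empirical: using the dataset, I would refit the model under perturbed orientations $(\gamma+\Delta\gamma,\beta+\Delta\beta)$ and report the induced change in predicted generation, equivalently showing that the fitting residual under a mis-specified orientation climbs well above the measurement-noise floor. This also disposes of the worry that the unmodeled multiplicative confounder $\zeta$ and sensor noise might swamp the effect, by exhibiting a favorable signal-to-perturbation ratio. The companion conjecture on plane-of-array correlation then follows almost for free from the same factorization: since $p[k]=c[k]\,I_t[k]$ with $c[k]$ varying only through the cell temperature, the (Pearson) correlation between $\{p[k]\}$ and $\{I_t[k]\}$ equals $1$ minus a term controlled by the relative variance of $c[k]$ over the window, which the small temperature-coefficient $\kappa$ in \cref{eq:acpower} keeps negligible.

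The hard part will be pinning down a defensible, model-based meaning for ``significant.'' The map $E(\gamma,\beta)$ genuinely has critical points and near-flat directions --- near the sun-tracking optimum, or between east/west mirror orientations over a roughly symmetric day --- so the statement cannot hold for \emph{every} perturbation without qualification. I expect the honest form of the claim to read: for perturbations of at least a few degrees away from a measure-zero set of degenerate or symmetric orientations, the relative change in $E$ exceeds a fixed threshold, with that threshold calibrated jointly from the sensitivity lower bound and the empirical spread. Getting the exceptional set and the constants right, rather than the chain-rule computation itself, is where the real effort lies.
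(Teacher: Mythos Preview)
Your proposal is far more ambitious than what the paper actually does. In the paper this statement is labeled a \emph{conjecture}, not a theorem, and it is never proved: the authors simply write that ``the aforementioned conjectures are further demonstrated'' by three plots (plane-of-array irradiance versus time of day for several azimuths at fixed tilt, the same for several tilts at fixed azimuth, and a polar heat map of yearly-average daily irradiance over $(\gamma,\beta)$), and conclude that these figures ``lend credence'' to the conjecture. There is no analytic sensitivity argument, no chain-rule bound, no treatment of degenerate configurations --- just numerical illustration.

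Consequently your chain-rule derivation through $(\gamma,\beta)\mapsto\mu=\cos\theta\mapsto p[k]$, your identification of the critical/degenerate set, and your proposed empirical signal-to-perturbation check all go well beyond the paper's own treatment and would, if carried out, turn the conjecture into something closer to a proposition. One technical correction: your claim that $I_{ds}[k]$ and $I_{dg}[k]$ are ``to first order, orientation-independent'' is not quite right --- in the standard models both diffuse terms carry explicit tilt dependence through the sky and ground view factors (roughly $(1\pm\cos\beta)/2$), so the $(\gamma,\beta)$-dependence is not entirely confined to the direct-beam term. This does not break your argument, but it means the sensitivity lower bound must account for an additional $\beta$-channel. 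Finally, your closing worry is exactly the right one: as literally stated (``\emph{any} changes \ldots result in a \emph{significant} change''), the conjecture is false near critical points and along symmetry directions, and the paper never confronts this; your qualified reformulation is the honest version of the claim.
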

\begin{conjecture}
	Various seasonal confounders affect solar power generation.
\end{conjecture}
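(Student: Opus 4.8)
\emph{Proof proposal.} The plan is to combine a structural argument drawn from the NREL generation model with an empirical validation on the dataset introduced in \cref{sec:Numericals}. On the structural side, I would start from \cref{eq:acpower} and observe that the power $p[k]$ factorizes into (i) the nameplate term $\overline{p}$, which is time invariant; (ii) the temperature-correction term $1+\kappa(\tau_c[k]-\tau_r)$; (iii) the lumped confounder $\zeta$; and (iv) the plane-of-array irradiance $I_t[k]$ of \cref{eq:poa}. Factors (ii)--(iv) each carry a seasonal signature: the cell temperature $\tau_c[k]$ tracks ambient temperature and hence the annual cycle; $I_t[k]$ inherits seasonality through the sky- and ground-diffuse components $I_{ds}[k], I_{dg}[k]$ and through the attenuation factor $f(\theta)$, which depends on relative air mass and therefore on the solar zenith $\theta_s$; and $\zeta$ absorbs wind speed, soiling, rainfall, and snow, all of which are seasonally modulated. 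Fixing the geometric metadata $(\overline{p},\gamma,\beta)$ and writing $p[k]$ as a deterministic astronomical term times a residual, the conjecture reduces to the claim that this residual process is not stationary across the year.

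To make that claim concrete I would proceed as follows. First, for each site, compute the clear-sky expected generation $\hat p[k]$ by plugging the inferred metadata and a clear-sky irradiance model into \crefrange{eq:ang_incidence}{eq:acpower}. Second, form the normalized residual $r[k] = p[k]/\hat p[k]$ over daylight samples and bucket it by month, or equivalently by solar-declination bin. Third, test for equality of the per-bucket distributions---an ANOVA or Kruskal--Wallis test on the bucket means, or a regression of $\log r[k]$ on seasonal covariates (ambient temperature, declination, month indicators) with heteroskedasticity-robust errors. Rejecting the null of no seasonal effect, consistently across the sites in the dataset, is exactly the evidence the conjecture asserts. I would also report effect sizes, such as the month-to-month swing of the mean of $r[k]$, to argue that the dependence is practically and not merely statistically significant, paralleling the treatment of \cref{conj:metadata}.

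The main obstacle is confounding in the strict sense: the dominant seasonal component of $p[k]$ is the purely astronomical variation of $\theta_s$ and of day length, which is not a confounder at all but a deterministic, known effect, and it must be divided out cleanly before any residual seasonality can be attributed to temperature, air mass, or weather. Imperfect clear-sky modeling, or small errors in the inferred $(\gamma,\beta)$, will leak astronomical signal into the residual and masquerade as a seasonal confounder; conversely, overly aggressive detrending can absorb genuine confounder effects. Managing this bias---ideally by validating the clear-sky correction on days flagged clear by an independent clearness-index criterion, and by checking robustness to perturbations of the inferred metadata---is where most of the care is needed. A secondary point is that, like the other two statements in this section, this is a modeling conjecture rather than a theorem: the honest deliverable is a reproducible empirical demonstration on the real data, together with the mechanistic reading of \cref{eq:acpower}, rather than a deductive proof.
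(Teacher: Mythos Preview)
Your proposal is not wrong, but it substantially overshoots what the paper does. The paper offers no proof of this statement: it is explicitly labeled a \emph{conjecture}, and the only justification the authors give is the sentence immediately preceding the three conjectures, namely that \cref{eq:acpower} contains an explicit temperature-correction term $1+\kappa(\tau_c[k]-\tau_r)$ together with the lumped multiplicative factor $\zeta$ absorbing ``temperature, relative air mass, wind speed and direction, rainfall, the strength of the irradiance, etc.'' The subsequent figures and text (\cref{fig:az_tilt}, \cref{fig:peak_daily}) are marshaled to support \cref{conj:metadata} and the irradiance--power correlation, not this seasonal-confounder claim; the paper simply lets the model structure speak for itself.

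Your structural paragraph is essentially a more careful version of that same reading of \cref{eq:acpower}, so on that front you and the paper agree. The residual-based ANOVA/Kruskal--Wallis program you sketch, and the care you propose around separating the deterministic astronomical signal from genuine confounders, go well beyond anything in the paper and would turn an asserted conjecture into an actual empirical finding. You correctly diagnose this in your last paragraph: the honest status here is ``modeling conjecture,'' and that is exactly how the paper leaves it.
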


\subsection{Solar PV Metadata Is Sensitive Information}

\begin{figure*}[!htbp]
	\centering
	\subfloat[][]{
		\includegraphics[width=0.32\textwidth]{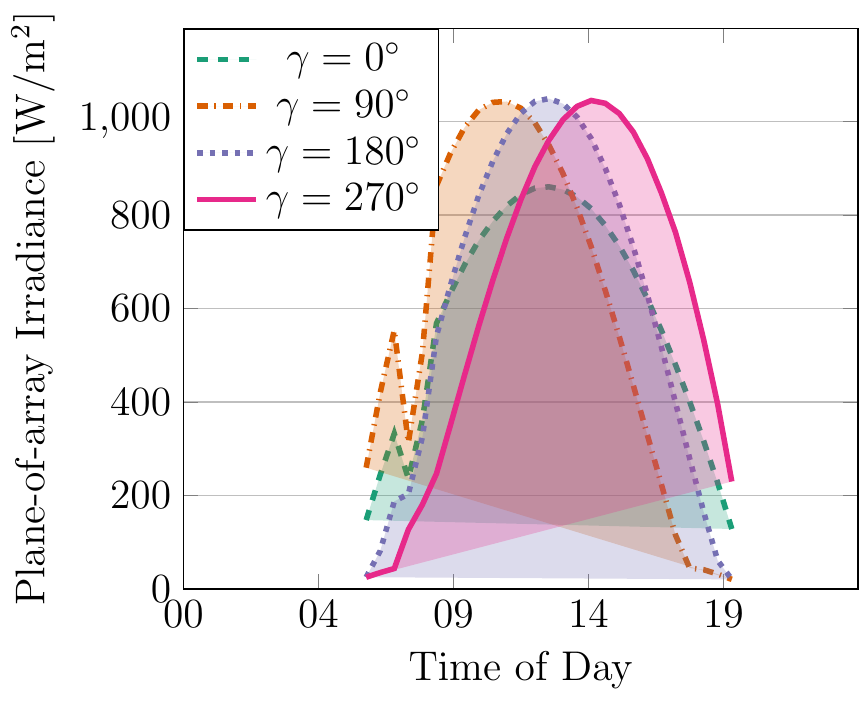}\label{fig:var_az_same_tilt}}\quad
	\subfloat[][]{
		\includegraphics[width=0.32\textwidth]{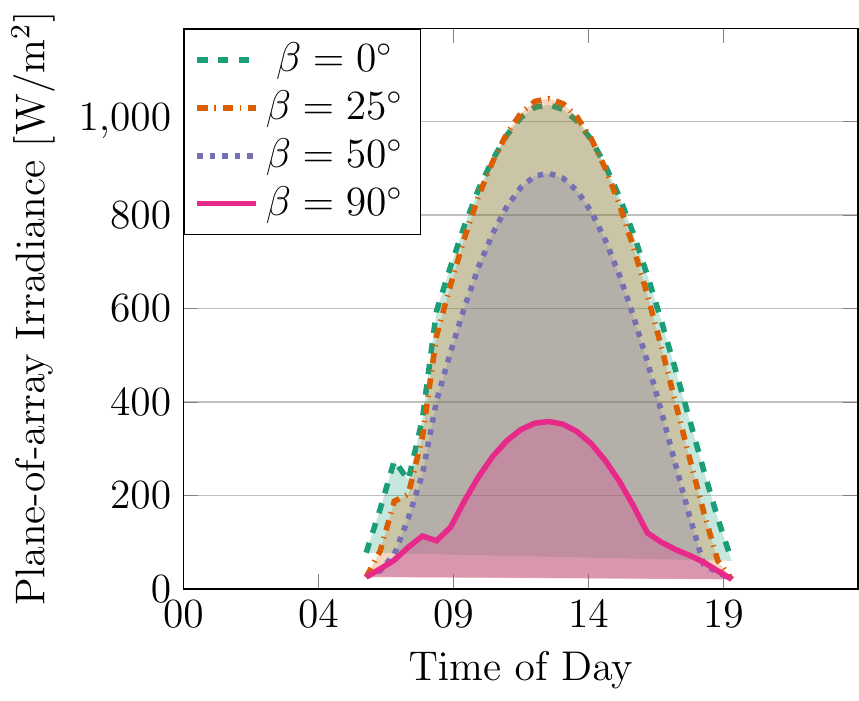}\label{fig:same_az_var_tilt}}\quad
	\subfloat[][]{
		\includegraphics[width=0.3\textwidth]{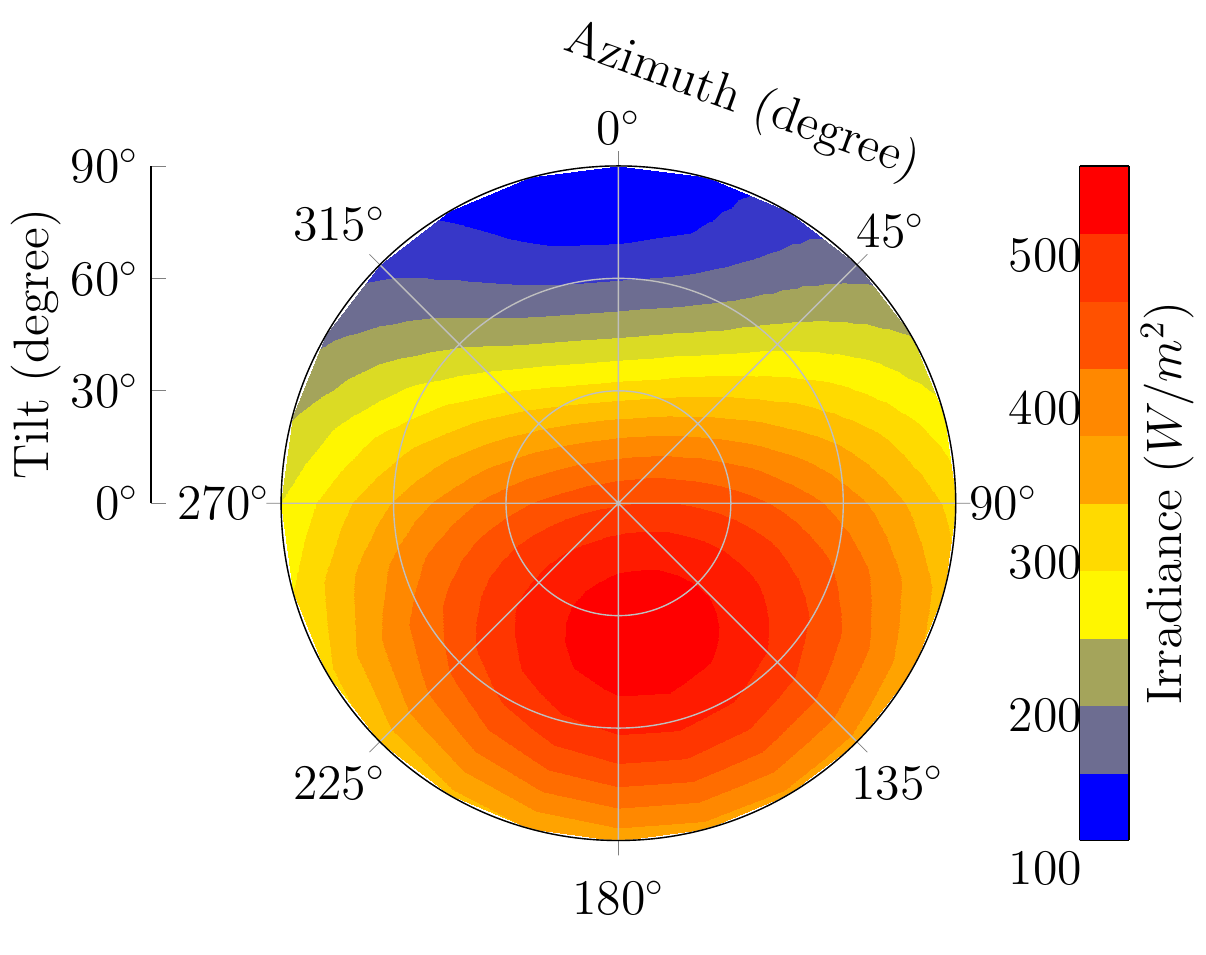}\label{fig:yearly_avg}}
	\caption{For a location in California (34.0122,-117.6889): (\subref{fig:var_az_same_tilt}) The plane-of-array irradiance on July 15, 2016, for various azimuths ($\gamma$) at a fixed tilt of $\beta=25\degree$. (\subref{fig:same_az_var_tilt}) The plane-of-array irradiance for various tilts ($\beta$) at a fixed azimuth of $\gamma=180\degree$ on the same day. (\subref{fig:yearly_avg}) The daily mean plane-of-array irradiance for the year $2016$ averaged over the course of the year. In this polar plot, the radial and the angular axes represent the surface tilt and azimuth, respectively.}
	\label{fig:az_tilt}
\end{figure*}
The aforementioned conjectures are further demonstrated in \cref{fig:az_tilt}, where the global irradiance over the course of a summer day in Chino, California is plotted for various azimuths at a fixed tilt of $25\degree$ (see \cref{fig:var_az_same_tilt}) and for various tilts at a fixed azimuth of $180\degree$ (see \cref{fig:same_az_var_tilt}). As seen in these figures, as the azimuth and tilt are varied, the solar generation profile of a system drastically changes in both the time of peak generation and the magnitude of the peak generation.  

Furthermore, \cref{fig:yearly_avg} shows the mean daily average irradiation on a polar plot with the surface azimuth and tilt on the angular and radial axes, respectively. As is apparent from the plot, the effectiveness of a solar PV system in generating energy directly depends on its surface azimuth and tilt. Particularly, for a system in Chino, California, this plot indicates that a panel with azimuth between $165\degree$ and $180\degree$, and a tilt between $35\degree$ and $40\degree$ maximizes the expected average daily power. 

Finally, we compiled the peak daily solar generation from several households in California, and the kernel density estimates of nameplate capacity under four quantiles of the maximum power generated are shown in \cref{fig:peak_daily}. This figure leads to the conclusion that generation profiles are sensitive to nameplate capacity.

\begin{figure}
	\centering
	\includegraphics[width=0.45\textwidth]{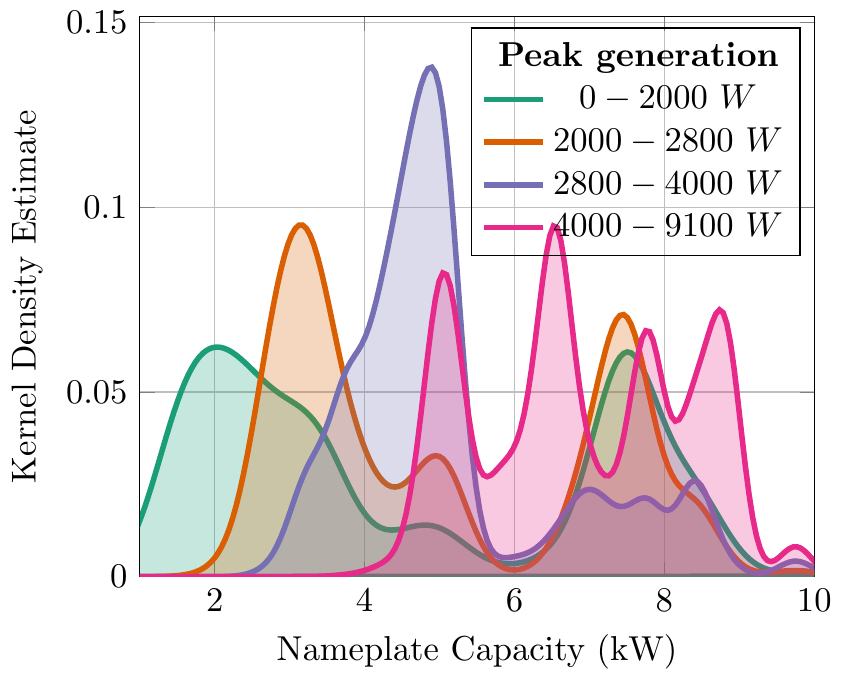}
	\caption{Joint plot of nameplate capacity (x-axis) and daily peak power generation in Watts (y-axis) with the systems classified based on four quantiles of daily peak power generated.}
	\label{fig:peak_daily}
\end{figure}

These figures lend credence to \Cref{conj:metadata} that nameplate capacity and surface azimuth and tilt are highly indicative of the generation profiles of solar PV systems, making each of these PII. 

\subsection{Solar PV Metadata Inference}\label{sec:solar_pv_metadata_inference_sketch}
The previous section established that the solar generation profiles are highly correlated with the local global solar irradiation profiles. By denoting the surface azimuth and tilt pair using $\blambda = [\gamma, \beta]^\top$ and their combined domain using $\bLambda$, this correlation can be used to find the candidate $\blambda \in \bLambda$ whose solar generation curve matches the solar irradiation profile the best. In other words, we want to solve the following maximization problem:
\begin{align}\label{eq:objective:solar_pv_metadata_inference}
	\max_{\blambda \in \bLambda} \quad \ell_{\bm{p}}^{(\mathrm{loc})}(\blambda) \qquad \equiv \qquad \max_{\blambda \in \bLambda} \quad - \|\hat{\bm{p}}_{\blambda}^{(\mathrm{loc})} - \bm{p}\|^2_2,
\end{align}
where $\bm{p} = [p[0], \ldots, p[K-1]]^\top$ is the vector of AC power generated by a Solar PV system from time $k=0$ to $k=K-1$, and $\hat{\bm{p}}_{\blambda}^{(\mathrm{loc})} = [\hat{p}_{\blambda}^{(\mathrm{loc})}[0], \ldots, \hat{p}_{\blambda}^{(\mathrm{loc})}[K-1]]^\top$ with $\hat{p}_{\blambda}^{(\mathrm{loc})}[k]$ the potential total plane of array irradiation generated by a system with the metadata $\blambda$ at time $k$ at the same location, denoted by ${(\mathrm{loc})}$. Throughout the rest of the paper, the location indicator will be omitted for the sake of brevity.

Note that $\hat{p}_{\blambda}[k]$ is calculated according to the model presented in \cref{eq:poa} using the publicly available solar irradiation data available at National Renewable Energy Laboratory's National Solar Radiation Database (NSRDB)~\cite{sengupta2018national}.

There are several challenges to implementing this simple problem:
\begin{itemize}[leftmargin=*]
	\item Various highly stochastic and difficult-to-model factors (such as diffusion, cloud cover, shading, dust, etc.) attenuate the total solar irradiation before it reaches the solar panel. One possible solution to overcome this issue is to find the $\blambda\in \bLambda$ whose solar irradiation profile matches the clearest day's solar generation profile, where $\bLambda$ is the set of all possible values that $\blambda$ can take.
	\item However, as solar power generation is seasonal, one cannot solely depend on any single day's solar generation profile to make inferences. We thus group days (e.g., into months or weeks), obtain the clearest days in each group, and find the $\blambda$ that best fits the generation profiles of all or most of the chosen days.
	\item Finally, the total number of possible values that $\blambda$ can take is very high. For example, to search in the integer space for azimuth and tilt, one would have to consider $360~\times~90~=~32,400$ possibilities. This would mean querying NSRDB for the solar irradiation data $32,400$ times for each group, which is not scalable.
\end{itemize}

With these caveats in mind, a methodology to infer $\blambda$ from generation data $p[k]$ for all $k\in[N-1]$ would take the following steps:
\subsubsection{Normalization}
In order to remove the dependence of $\hat{\bm{p}}_{\blambda}$ and $\bm{p}$ on the nameplate capacity, we first normalize the daily generation data and the potential daily generation data to be between 0 and 1.
\subsubsection{Prototypical Generation profile}
As discussed earlier in this section, owing to the seasonal nature of solar generation, we can not solely depend on any single day's solar generation profile. Thus, we first split the solar generation time series of a PV system into groups of contiguous time series corresponding to months or quarters. Let the set of these groups be denoted by $\mathcal{G}$. We select the day in each group, $g\in \mathcal{G}$, which has the highest correlation between its direct normal irradiance (DNI) and its total global horizontal irradiance (GHI). These days represent the clearest days in their corresponding groups, and their power generation is denoted by $\bm{p}_g$.  
\subsubsection{Grid Search} 
For each group, find the distance between its prototypical generation, $\bm{p}_g$, and the potential plane-of-array irradiation, $\hat{\bm{p}}_{\blambda}$, for all $\blambda \in \bLambda$, i.e.:
\begin{equation}
	\loss_g(\blambda) := -\frac{1}{N}\|\hat{\bm{p}}_{\blambda} - \bm{p}_g\|^2_2,
\end{equation}
where $N$ is the number of measurements available for a day (for example, it is $24$ for data with a one-hour resolution).
\subsubsection{Fit Score}
Finally, for each $\blambda \in \bLambda$, we calculate a fit score as follows:
\begin{equation}\label{eq:fit_score}
	\loss(\blambda) = \sum_{g\in\mathcal{G}} \loss_g(\blambda).
\end{equation}
The best fit $\sblambda$ is then chosen as the $\blambda \in \bLambda$ with the highest fit score. 
However, as these steps do not scale well as the size of $\bLambda$ grows, we turn to Bayesian optimization (BO), described next, to infer $\sblambda$.

\section{Proposed Methodology: DP-BO Solar Metadata Inference}\label{sec:Methodology}
This section presents a methodology to infer the surface metadata of distributed PV systems from their solar generation data in a differentially private manner to protect the DP of the system's solar generation profiles. The inference is mainly facilitated via Bayesian optimization to maximize the correlation between:
\begin{enumerate}
	\item the solar generation profiles, and
	\item the solar irradiation patterns for candidate metadata values corresponding to the location and time of year.
\end{enumerate}
We consider the scenario where the solar generation data pertaining to consumers is either housed at a central trusted curator such as a utility or at the consumers themselves. An untrusted third-party analyst may then query the trusted curator or the consumers for statistics about the datasets, to which the curator responds with a DP answer as illustrated in \Cref{fig:trusted_curator}. 

Given a solar generation profile, $\bm{p}$, our goal is to maximize an unknown function $\loss: \bLambda \rightarrow \mathbb{R}$:
\begin{equation}
	\max_{\blambda \in \bLambda} \quad \loss(\blambda), \label{eq:bayes_opt}
\end{equation}
where $\loss$, given by \cref{eq:fit_score}, is the fit score of the solar generation, $\bm{p}$, when compared to the irradiation patterns with the parameters $\blambda \in \bLambda \subseteq \mathbb{R}^2$. This problem has the following characteristics:
\begin{enumerate}
	\item The function $\loss(\cdot)$ is expensive to evaluate, see \Cref{sec:solar_pv_metadata_inference_sketch}.
	\item The size of the input $\blambda$ is not too large. In our problem, it is $2$.
	\item The feasible set $\bLambda$ is a simple, finite set whose membership is easy to establish. In our case, the feasible set can be characterized by the following equation:
	      \begin{equation}
		      \bLambda = \{\blambda\equiv(\gamma, \beta) \mid \gamma \in [0, 360), \beta\in[0, 90]\}.
	      \end{equation}
\end{enumerate}
These characteristics make Bayesian optimization a good tool to solve the aforementioned problem.

\subsection{Bayesian Optimization}\label{sec:BayesOpt}

Bayesian Optimization (BO) is an iterative algorithm used to evaluate the optimizer of difficult/expensive-to-evaluate functions. Formally, BO considers the problem of maximizing the function $\loss(\cdot)$ over its domain $\bLambda$, such that a sequence $\bLambda_T := [\blambda_1, \ldots, \blambda_{T}]$ is chosen to maximize the sum $\sum_{t=1}^T \loss(\blambda_t)$. 
By defining the instantaneous regret of choosing $\blambda_t$ at time $t$ as $r_t := \loss(\sblambda) - \loss(\blambda_t)$ and the cumulative regret as $R_T := \sum_{t=1}^T r_t$, where $\sblambda$ is the maximizer, the above problem is equivalent to minimizing $R_T$.

\begin{algorithm}
	\caption{Bayesian optimization}\label{alg:BayesOpt}
	\KwData{Prior over $\loss$}
	\KwResult{$\sblambda \in \bLambda$}
	\textbf{Warm Start:} Observe the objective function, $\loss$, at a few points, say at $[\blambda_0, \ldots, \blambda_{t-1}]$.

	\For{$\tau = t\ldots T$}{
		\textbf{Update Posterior}: Update the posterior with the data observed so far.

		\textbf{Find the next point}: Utilizing the updated posterior distribution and the acquisition function, evaluate the point, $\blambda_{\tau}$, to sample next.

		\textbf{Observe}: Evaluate $\loss$ at $\blambda_{\tau}$.
	}

	\textbf{Optimizer}: Choose the point whose evaluation was the highest.
\end{algorithm}

A sketch of the algorithm is showcased in \Cref{alg:BayesOpt}. BO approaches this problem by modeling $\loss$ to be a random process and places a prior belief on it. In each iteration, the algorithm observes the function at a selected point $\blambda_{t}$, and updates the posterior distribution of the function. Using the updated posterior distribution, the algorithm then uses a special acquisition function to evaluate the next point at which to observe the function. This is repeated until the observation budget is exhausted. In the rest of the section, the aforementioned steps are described in detail.

\subsubsection{Gaussian Process Prior – Prior Distribution over the objective function}\label{sec:BayesOpt_GP}
As discussed above, BO places a prior on the objective function, whose posterior is updated after each subsequent observation of a carefully chosen point. In this section, we describe the Gaussian Process (GP) prior.

Suppose that before iteration $t$, we have observed the function evaluated at points $\bLambda_t := [\blambda_1, \ldots, \blambda_{t}]$ and the function evaluations at these points are ${\bloss}_t = [{\loss}_1, \ldots, {\loss}_t]$. We place a multivariate normal prior over ${\bloss}_t$ with mean $\bm{\mu}_t$, that captures our estimate of the function across the domain, and covariance $\bm{\Sigma}_t$, that captures our uncertainty of the estimate. That is:
\begin{equation}
	{\bloss}_t(\bLambda_t) \sim \mathcal{N}(\bm{\mu}_t, \bm{\Sigma}_t),
\end{equation}
where the mean vector, $\bm{\mu}_t$, is the vector of mean function ($\mu$) evaluated at $\bLambda_t$, i.e., 
\begin{equation}
	\bm{\mu}_t := \bm{\mu}(\bLambda_t) := [\mu(\blambda_1), \ldots, \mu(\blambda_t)].
\end{equation}
Similarly, the covariance matrix is such that 
\begin{equation}
	[\bm{\Sigma}_t]_{\tau, \tau'} := [\bm{\Sigma}(\bLambda_t, \bLambda_t)]_{\tau, \tau'} := \Sigma(\blambda_{\tau}, \blambda_{\tau'}),	
\end{equation}
where $\Sigma$ is a kernel function. 

Without loss of generality, through the rest of the paper, we assume that $\mu(\blambda)=0,~\forall \blambda \in \bLambda$~\cite{rasmussen2005gaussianb}, and a Radial Basis Function kernel (RBF), i.e.,
\begin{equation}
	\Sigma(\blambda_{t}, \blambda_{t'}) = \exp\left(-\|\blambda_{t} - \blambda_{t'}\|^2\right) \leq 1.
\end{equation}
Note that $\Sigma(\blambda_{t}, \blambda_{t})=1$. Furthermore, after observing ${\bloss}_t$, the posterior mean and variance of the objective function evaluated at a point $\blambda$ is easily computed using the following equations:
\begin{subequations}\label{eq:posterior}
	\begin{align}
		\loss(\blambda)    & \mid {\bloss}_t \sim \mathcal{N}(\mu_t(\blambda), \sigma_t^2(\blambda))                                                                           \\
		\mu_t(\blambda)      & = \bm{\Sigma}(\blambda, \bLambda_t)^\top\bm{\Sigma}(\bLambda_t, \bLambda_t)^{-1} {\bloss}_t \label{eq:BO_mu_t}                                                          \\
		\sigma_t^2(\blambda) & = 1 - \bm{\Sigma}(\blambda, \bLambda_t)^\top\bm{\Sigma}(\bLambda_t, \bLambda_t)^{-1}\bm{\Sigma}(\blambda, \bLambda_t),
	\end{align}
\end{subequations}
where $\bm{\Sigma}(\blambda, \bLambda_t) := [\Sigma(\blambda, \blambda_1), \ldots, \Sigma(\blambda, \blambda_t)]^\top$.

\subsubsection{Acquisition Function – Finding the next point}
As mentioned above, after updating the posterior, BO uses an acquisition function to select the next point that the algorithm observes. In this paper, we select $\blambda$ maximizes the upper-confidence bound (UCB) of the posterior GP model of $\loss$:
\begin{equation}
	\blambda_t := \arg\max_{\blambda \in \bLambda} \mu_{t-1}(\blambda) + \sqrt{\phi_t}\sigma_{t-1}(\blambda),\label{eq:ucb}
\end{equation}
where $\phi_t$ is a parameter that trades off the exploitation of maximizing $\mu_{t-1}(\blambda)$ and the exploration of maximizing $\sigma_{t-1}(\blambda)$. The exact manner in which $\phi_t$ is chosen is described later in \cref{sec:Methodology}, but it is an increasing function in $t$ and, as evident in the \cref{eq:ucb}, captures the number of standard deviations in the upper confidence bounds.

\subsubsection{Regret Bounds for GP-UCB}
If the domain of the optimization, $\bLambda$ is of finite size $|\bLambda|$, the GP-UCB algorithm described in \cref{alg:BayesOpt} achieves sublinear regret bounds. The proof of the said sublinear bound was facilitated by first proving the following lemma~\cite[Lemma 5.1]{srinivas2012informationtheoretic}:
\begin{lemma}\label{lem:gp-ucb-bound}
	Running GP-UCB with $\phi_t$ for a function $\loss$ sampled from a GP with mean function zero and covariance function $\Sigma(\blambda, \blambda')$, we obtain the following bound for all $t\geq 1$:
	\begin{equation}
		Pr\{ |\loss(\blambda) - \mu_{t-1}(\blambda)| \leq \sqrt{\phi_t} \sigma_{t-1}(\blambda)\} \geq 1-\delta,
	\end{equation}
	where $\phi_t = 2\log(|\bLambda|t^2\pi^2/6\delta)$ and $0<\delta\lll 1$.
\end{lemma}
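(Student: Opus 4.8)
\emph{Proof proposal.} The plan is to obtain a uniform-in-$(\blambda,t)$ confidence band around the GP posterior mean by combining the exact Gaussian form of the posterior (\cref{eq:posterior}) with a standard Gaussian tail inequality and a union bound over the finite set $\bLambda$ and over all $t\geq 1$; the prescribed growth $\phi_t\propto\log(t^2)$ is precisely what makes the union over the infinite horizon summable. First I would fix $\blambda\in\bLambda$ and $t\geq 1$ and condition on the history, consisting of the query points $\bLambda_{t-1}=[\blambda_1,\ldots,\blambda_{t-1}]$ and the observed values ${\bloss}_{t-1}$. By the zero-mean GP prior and the posterior identities in \cref{eq:posterior}, conditionally on this history $\loss(\blambda)$ is Gaussian with mean $\mu_{t-1}(\blambda)$ and variance $\sigma_{t-1}^2(\blambda)$. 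If $\sigma_{t-1}(\blambda)=0$ — e.g.\ when $\blambda$ has already been queried — then $\loss(\blambda)=\mu_{t-1}(\blambda)$ almost surely and the stated event holds with probability one, so I may assume $\sigma_{t-1}(\blambda)>0$ in what follows.

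Next, writing $Z:=(\loss(\blambda)-\mu_{t-1}(\blambda))/\sigma_{t-1}(\blambda)$, which is $\mathcal{N}(0,1)$ given the history, I would invoke the elementary bound $\Pr\{Z>c\}\leq\tfrac12 e^{-c^2/2}$ for $c\geq 0$, which with symmetry gives $\Pr\{|Z|>c\}\leq e^{-c^2/2}$; taking $c=\sqrt{\phi_t}$ yields
\[
	\Pr\left\{\,\left|\loss(\blambda)-\mu_{t-1}(\blambda)\right|>\sqrt{\phi_t}\,\sigma_{t-1}(\blambda)\,\right\}\ \leq\ e^{-\phi_t/2}.
\]
Since the right-hand side does not depend on the realization of the history, the same bound holds unconditionally.

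Then I would calibrate $\phi_t$ so that $e^{-\phi_t/2}=\tfrac{6\delta}{\pi^2 t^2\,|\bLambda|}$; this solves to $\phi_t=2\log(|\bLambda| t^2\pi^2/6\delta)$, exactly the value in the statement (and positive for $0<\delta\lll 1$). A union bound over $\blambda\in\bLambda$ and $t\geq 1$, using $\sum_{t\geq 1}t^{-2}=\pi^2/6$, then gives
\[
	\Pr\left\{\exists\,\blambda\in\bLambda,\ t\geq 1:\ \left|\loss(\blambda)-\mu_{t-1}(\blambda)\right|>\sqrt{\phi_t}\,\sigma_{t-1}(\blambda)\right\}\ \leq\ \sum_{t\geq 1}\sum_{\blambda\in\bLambda}\frac{6\delta}{\pi^2 t^2\,|\bLambda|}\ =\ \delta,
\]
so with probability at least $1-\delta$ the bound $|\loss(\blambda)-\mu_{t-1}(\blambda)|\leq\sqrt{\phi_t}\,\sigma_{t-1}(\blambda)$ holds simultaneously for every $\blambda$ and $t$, which in particular yields the claim.

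The step I expect to require genuine care is the conditioning argument: because the query locations in \cref{eq:ucb} are chosen adaptively from the very observations being conditioned on, one must argue explicitly that $\mu_{t-1}(\blambda)$ and $\sigma_{t-1}^2(\blambda)$ remain the correct conditional moments. The clean way is to condition on the full history (locations and values together) and appeal to the defining Gaussian-conditioning property of a GP, which is insensitive to the — possibly data-dependent — rule that produced the conditioning points; the remaining ingredients, namely the Gaussian tail inequality and the $\sum t^{-2}$ union bound, are entirely routine.
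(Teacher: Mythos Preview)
Your argument is correct and is precisely the standard proof of this concentration result: Gaussian posterior conditional on the (adaptively chosen) history, the tail bound $\Pr\{|Z|>c\}\leq e^{-c^2/2}$, and a union bound over $\bLambda$ and $t$ calibrated via $\sum_{t\geq 1}t^{-2}=\pi^2/6$. Note, however, that the paper does not supply its own proof of this lemma; it simply quotes it as \cite[Lemma~5.1]{srinivas2012informationtheoretic} and remarks that ``this lemma is sufficient for our purposes,'' so there is nothing to compare your approach against --- you have reproduced the proof from the cited reference, including the care about adaptive conditioning that Srinivas et al.\ also flag.
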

This lemma is sufficient for our purposes. Thus far, we have described a methodology to infer the metadata. Next, we describe the DP and the DP-BO metadata inference mechanism.

\subsection{Differential Privacy}\label{sec:dp_prelims}
In this section, we briefly introduce the conventional definitions that explain how DP is measured and established. As discussed in \cite{ravi2022differentially}, the `15/15' Rule~\cite{cpuc2011proposed}, an often used `anonymization' technique in the industry, does not actually provide any statistical guarantees of privacy. We propose the use of DP as a technically sound alternative to this rule in the smart grid industry to publish private aggregate query answers.

To define DP, consider an analyst with a query $\bblambda(\cdot)$ on a dataset $\bm{p}$. 
Informally, to say that an algorithm (often referred to as the DP mechanism) protects the differential privacy of the individual records in the dataset $\bm{p}$, it should have a random output, denoted by $\tbblambda(\bm{p})$ with distribution $f(\tblambda|\bm{p})$\footnote{This is the probability density function for continuous random queries and the probability mass function for discrete random variables}, should approximately be the same as $\bblambda(\bm{p})$, with or without the data of any specific individual record. In other words, the randomness of $\tbblambda$ should be enough so that an observed output from it will not reveal whether one of two datasets $\bm{p}$ or $\bm{p}'$, differing only in one record, was the input to $\tbblambda$. The rationale for this is that if the untrusted analyst is unable to tell which input the output came from, then they can not infer the presence or lack thereof of any one particular record and the record's content.

In formal terms, differential privacy was first introduced in \cite{dwork2006calibrating, dwork2006our}. It states that:
\begin{definition}[$(\epsilon,\delta)$-Differential privacy]\label{def:DP}
	A randomized mechanism $\tbblambda(\bm{p})$ is $(\epsilon,\delta)$-differentially private if for all neighboring datasets $\bm{p}$ and $\bm{p}'$ that differ in one point, for any arbitrary event pertaining to the outcome of the query, the randomized mechanism satisfies the following inequality
	\begin{equation}\label{eq:e-dp}
		\forall \bLambda,\quad
		Pr(\tbblambda(\bm{p}) = \blambda) \leq \exp(\epsilon)Pr(\tbblambda(\bm{p}') = \blambda) + \delta,
	\end{equation}
	where $Pr(\mathcal{A})$ denotes the probability of the event $\mathcal{A}$, for some privacy budget $\epsilon\geq 0$ and $\delta \in [0,1]$.
\end{definition}
In this definition, $\epsilon$ is known as the privacy budget. It, along with $\delta$, dictates the level of DP that is guaranteed by a mechanism. 


Some of the most ubiquitous DP mechanisms are additive noise mechanisms such as the Laplace and the Gaussian Mechanisms~\cite{dwork2006calibrating}. In these mechanisms, additive random noise is added to the true query answer. A key challenge here is to calibrate the amount of noise that is needed to satisfy the definition of DP, but not overdo it such that the DP answer becomes useless. However, the task here is to not answer a numerical query, such as the count or the average. We are trying to choose the best $\blambda \in \bLambda$ that maximizes the fit score in \cref{eq:fit_score}. An exponential mechanism, first defined by McSherry and Talwar~\cite{mcsherry2007mechanism}, is ideal for this purpose.
\begin{definition}[Exponential Mechanism]\label{def:exponential_mechanism}
	Given a set $\bLambda$ of possible outputs, a score function $\loss: \bLambda \rightarrow \mathbb{R}$ with a global sensitivity of $\Delta_{\ell}$, the exponential mechanism outputs $\blambda \in \bLambda$ with probability proportional to: 
	\begin{equation}\label{eq:exponential_mechanism}
		\exp\left(\frac{\epsilon\loss(\blambda)}{2\Delta_{\ell}}\right) \times \pi(\blambda),
	\end{equation}
	where $\pi$ is a prior measure on $\bLambda$.
\end{definition}
Furthermore, McSherry and Talwar proved that this exponential mechanism $\epsilon$-DP.
\begin{theorem}[Exponential Mechanism is $\epsilon$-DP~\cite{mcsherry2007mechanism}]\label{thm:exp_mechanism_is_eDP}
	The exponential mechanism in \Cref{def:exponential_mechanism} is $\epsilon$-DP.
\end{theorem}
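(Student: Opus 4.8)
The plan is to verify \Cref{def:DP} directly with $\delta = 0$: for every pair of neighboring datasets $\bm{p}$ and $\bm{p}'$ and every outcome, the ratio of the two output probabilities should be at most $\exp(\epsilon)$. First I would write the mechanism's probability mass function explicitly by normalizing the weight in \cref{eq:exponential_mechanism}. Writing $\loss(\cdot)$ for the score on $\bm{p}$ and $\ploss(\cdot)$ for the score on $\bm{p}'$,
\begin{equation}
	Pr(\tbblambda(\bm{p}) = \blambda) = \frac{\exp\!\left(\frac{\epsilon\,\loss(\blambda)}{2\Delta_{\ell}}\right)\pi(\blambda)}{\sum_{\blambda'\in\bLambda}\exp\!\left(\frac{\epsilon\,\loss(\blambda')}{2\Delta_{\ell}}\right)\pi(\blambda')},
\end{equation}
and the analogous expression with $\ploss$ for $\bm{p}'$.

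Next, for a fixed singleton outcome $\blambda$, I would form the ratio $Pr(\tbblambda(\bm{p}) = \blambda)/Pr(\tbblambda(\bm{p}') = \blambda)$ and split it into two factors: a numerator factor $\exp(\epsilon(\loss(\blambda) - \ploss(\blambda))/(2\Delta_{\ell}))$, and a denominator factor equal to the ratio of the two partition functions. The numerator factor is at most $\exp(\epsilon/2)$ because the definition of the global sensitivity $\Delta_{\ell}$ gives $|\loss(\blambda) - \ploss(\blambda)| \le \Delta_{\ell}$ for every $\blambda$. For the denominator factor I would use the same sensitivity bound termwise, $\exp(\epsilon\loss(\blambda')/(2\Delta_{\ell})) \ge \exp(-\epsilon/2)\exp(\epsilon\ploss(\blambda')/(2\Delta_{\ell}))$, so that summing (the common factor $\pi(\blambda')$ passing through) shows the partition-function ratio is also at most $\exp(\epsilon/2)$. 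Multiplying the two factors yields the $\exp(\epsilon)$ bound for singletons, and since it holds pointwise in $\blambda$ it lifts to arbitrary events by summation over the outcomes in the event; as $\delta = 0$ throughout, the mechanism is $\epsilon$-DP (equivalently $(\epsilon,0)$-DP in the sense of \Cref{def:DP}).

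The main obstacle — really a bookkeeping subtlety worth stating carefully — is the denominator factor: the two normalizing constants move in opposite directions when passing from $\bm{p}$ to $\bm{p}'$, so one must apply the termwise inequality in the correct direction to get an upper bound on the partition-function ratio rather than a lower one. It is also worth flagging where the factor of $2$ in the exponent of \cref{eq:exponential_mechanism} is used: it is precisely what makes each of the numerator and denominator factors contribute $\exp(\epsilon/2)$, so that their product is $\exp(\epsilon)$ rather than $\exp(2\epsilon)$. No measure-theoretic care is needed since $\bLambda$ is finite; in the continuous-output case the sums are replaced by integrals against $\pi$ and the argument is unchanged.
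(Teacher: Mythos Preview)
Your argument is correct and is exactly the standard McSherry--Talwar proof: bound the ratio of output probabilities by splitting into the ratio of unnormalized weights and the inverse ratio of partition functions, each contributing at most $\exp(\epsilon/2)$ via the global-sensitivity inequality. The paper itself does not supply a proof of \Cref{thm:exp_mechanism_is_eDP}; it simply states the result with a citation to~\cite{mcsherry2007mechanism} (the line ``This completes the proof of \Cref{thm:exp_mechanism_is_eDP}'' at the end of \Cref{app:proof} is a mislabeled cross-reference to the $(\epsilon,\delta)$ theorem for \Cref{alg:DP}), so there is nothing further to compare against.
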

DP is not a property of the data, but of the mechanism, and more specifically the query that the mechanism is privatizing. As such, the amount of noise is often chosen proportional to the sensitivity of the query defined below:
\begin{definition}[Global Sensitivity]
	The global sensitivity of a function $\loss: \bLambda \rightarrow \mathbb{R}$ over all neighboring datasets $X, \bm{p}'$ is:
	\begin{equation}
		\Delta_{\ell} := \max_{\bm{p},\bm{p}' \subseteq \mathcal{X}, \blambda \in \bLambda} |\loss(\blambda) - \ploss(\blambda)|
	\end{equation}
\end{definition}
This definition establishes that for any two neighboring datasets, $\bm{p}$ and $\bm{p}'$, $\bblambda(\bm{p})$ and $\bblambda(\bm{p}')$ differ by at most $\Delta_{\bblambda}$. 

With these definitions, we are now ready to infer Solar PV metadata in a differentially private manner with the BO objective as our score function and an exponential mechanism.

\subsection{DP Metadata Inference based on BO} 
In this section, we use the various sub-processes, namely -- 1) Solar metadata inference mechanism described in \Cref{sec:solar_pv_metadata_inference_sketch}, 2) BO for solar PV metadata inference mechanism described in \Cref{sec:BayesOpt}, and 3) DP described in \Cref{sec:dp_prelims} -- to describe a mechanism to differentially privately publish the Solar PV metadata. 

The algorithm includes the following main subroutines:
\subsubsection{Preprocess} In this step, the solar generation profile is preprocessed in order to prepare it for the BO step. The preprocessing steps, described in \Cref{sec:solar_pv_metadata_inference_sketch}, include normalization, identification of monthly prototypical solar generation days and their generation profiles, and a facility to obtain solar irradiation patterns for the said location on these days.
\subsubsection{BO} Using the preprocessed data, we obtain an estimate of $\loss$ using the algorithm described in \Cref{sec:BayesOpt} and listed in \Cref{alg:BayesOpt}.
\subsubsection{DP} Finally, we use the exponential mechanism to publish the metadata in a differentially private manner.

This set of routines is listed in \Cref{alg:DP}:
\begin{algorithm}
	\caption{DP-BO Solar PV Metadata inference}\label{alg:DP}
	\KwData{$\bm{p}$; $T$; $\bLambda\subseteq \mathbb{R}^2$, $(\epsilon, \delta)$; $\sigma^2_0$; $\rho_T$; $\mu_0 = 0$}
	\KwResult{$\tblambda$}
	\For{t = 1\ldots T}{
	$\phi_t \gets 2\log(|\bLambda|t^2\pi^2/(6\delta))$

	$\blambda_t \gets \arg \max_{\blambda \in \bLambda} \mu_{t-1}(\blambda) + \sqrt{\phi_t} \sigma_{t-1}(\blambda)$

	Observe fit score, $\loss_t$, given $\blambda_t$

	Update $\mu_{t}$ and $\sigma^2_{t}$ according to \cref{eq:posterior}
	}
	Draw $\tilde{\blambda} \in \bLambda$ w.p. $Pr[\blambda] \propto \exp\left( \frac{\epsilon\mu_T(\blambda)}{2(2\sqrt{\phi_{T+1}}+c)}\right)$\;
\end{algorithm}

\subsection{DP guarantees of DP-BO}
In this section, we provide the DP guarantees of DP-BO, listed in \cref{alg:DP}, and a sketch of its proof.
The main theorem states that DP-BO is $(\epsilon, \delta)$-DP:
\begin{theorem}[\Cref{alg:DP} is $(\epsilon, \delta)$-DP]
	Let $\bblambda_{\ell}(\bm{p})$ denote the mechanism presented in \Cref{alg:DP}. If \cref{lem:gp-ucb-bound} holds, then the output of $\bblambda_{\ell}(\bm{p})$, $\tblambda$, is $(\epsilon, \delta)$-DP for any pair of neighboring datasets $\bm{p}$ and $\bm{p}' \in \mathcal{X}$.
\end{theorem}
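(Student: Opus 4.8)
The plan is to recognize the final line of \Cref{alg:DP} as an instance of the exponential mechanism of \Cref{def:exponential_mechanism}, but with the Gaussian-process posterior mean $\mu_T(\cdot)$ playing the role of the score function in place of the (inaccessible) true fit score $\loss(\cdot)$, and with $\Delta_\ell$ replaced in the exponent by $\Delta := 2\sqrt{\phi_{T+1}}+c$, where $c=\Delta_\ell$ is the global sensitivity of the fit score. By \Cref{thm:exp_mechanism_is_eDP} it then suffices to establish two facts: (i) $\Delta$ is a valid upper bound on the global sensitivity of the map $\bm{p}\mapsto\mu_T(\cdot)$, i.e. $|\mu_T(\blambda)-\mu_T'(\blambda)|\le\Delta$ for every $\blambda\in\bLambda$ and every pair of neighboring datasets $\bm{p},\bm{p}'$, where $\mu_T$ and $\mu_T'$ denote the posterior means produced by \Cref{alg:DP} on $\bm{p}$ and $\bm{p}'$; and (ii) this sensitivity bound holds on an event of probability at least $1-\delta$, so that the pure $\epsilon$-DP of the exponential mechanism degrades to exactly $(\epsilon,\delta)$-DP in the sense of \Cref{def:DP}.

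For (i), the key is to \emph{not} compare $\mu_T$ and $\mu_T'$ directly through the linear-algebraic formula \cref{eq:BO_mu_t}: the two runs of the loop query different sequences $\bLambda_T$, observe different fit-score vectors $\bloss_T$, and invert different Gram matrices, so a head-on perturbation bound would be unwieldy. Instead I would route through the true fit score. Applying \Cref{lem:gp-ucb-bound} at iteration $T+1$ to the run on $\bm{p}$ gives $|\loss(\blambda)-\mu_T(\blambda)|\le\sqrt{\phi_{T+1}}\,\sigma_T(\blambda)$, and since $\sigma_T^2(\blambda)=1-\bm{\Sigma}(\blambda,\bLambda_T)^\top\bm{\Sigma}(\bLambda_T,\bLambda_T)^{-1}\bm{\Sigma}(\blambda,\bLambda_T)\le 1$ (the quadratic form being nonnegative because the Gram matrix is positive semidefinite), this reads $|\loss(\blambda)-\mu_T(\blambda)|\le\sqrt{\phi_{T+1}}$. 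The identical argument on $\bm{p}'$ yields $|\ploss(\blambda)-\mu_T'(\blambda)|\le\sqrt{\phi_{T+1}}$. The global sensitivity of the fit score controls the remaining gap, $|\loss(\blambda)-\ploss(\blambda)|\le c$, and a triangle inequality then gives $|\mu_T(\blambda)-\mu_T'(\blambda)|\le 2\sqrt{\phi_{T+1}}+c=\Delta$, which is precisely the denominator appearing in \Cref{alg:DP}.

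For (ii), I would carry out the probabilistic bookkeeping explicitly. The choice $\phi_t=2\log(|\bLambda|t^2\pi^2/(6\delta))$ is exactly what turns \Cref{lem:gp-ucb-bound} into a simultaneous union bound over $\blambda\in\bLambda$ and over iterations, so a single invocation delivers the concentration inequality uniformly, with only a constant-factor adjustment of $\delta$ needed if one wants to treat the $\bm{p}$-run and the $\bm{p}'$-run separately. Let $\mathcal{E}$ be the event on which these bounds hold for both runs; then $Pr(\mathcal{E}^c)\le\delta$, and on $\mathcal{E}$ the bound of (i) is valid, so the exponential-mechanism inequality of \Cref{thm:exp_mechanism_is_eDP} holds with no additive slack. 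Decomposing $Pr(\bblambda_\ell(\bm{p})=\blambda)$ over $\mathcal{E}$ and $\mathcal{E}^c$ and bounding the latter contribution by $\delta$ yields \cref{eq:e-dp}.

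The hard part will be (i), and this is, I believe, exactly why the theorem is stated conditionally on \Cref{lem:gp-ucb-bound}: because the acquisition step is adaptive, the query locations $\blambda_1,\dots,\blambda_T$ are themselves functions of the dataset, so a direct sensitivity analysis of $\mu_T$ through \cref{eq:posterior} would have to track the perturbations of $\bLambda_T$, of $\bloss_T$, and of the matrix inverse all at once. \Cref{lem:gp-ucb-bound} is the device that sidesteps this, by sandwiching each posterior mean against its low-sensitivity noiseless target ($\loss$ for $\bm{p}$, $\ploss$ for $\bm{p}'$). A secondary subtlety worth spelling out is the interface between the high-probability concentration statement and the DP definition, which concerns the mechanism's own internal randomness; the clean resolution is the decomposition in (ii), which isolates the failure of \Cref{lem:gp-ucb-bound} as the source of the additive $\delta$.
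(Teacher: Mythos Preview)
Your proposal follows the same two-step skeleton as the paper's proof---a triangle-inequality route through the true fit scores to bound the sensitivity of $\mu_T$, and then an event decomposition to convert the exponential mechanism's $\epsilon$-DP into $(\epsilon,\delta)$-DP---so structurally you are aligned with the paper. The genuine point of departure is your handling of the middle term $|\loss(\blambda)-\ploss(\blambda)|$: you bound it \emph{deterministically} by the global sensitivity $c=\Delta_\ell$ of the fit score, whereas the paper never invokes $\Delta_\ell$ at this step. Instead, consistent with the GP modeling assumption of \Cref{sec:BayesOpt_GP}, the paper treats $\loss$ and $\ploss$ as zero-mean unit-variance Gaussian samples, notes that $(\ploss(\blambda)-\loss(\blambda))/\sqrt{2}\sim\mathcal{N}(0,1)$, and applies sub-Gaussian concentration together with a union bound over $\bLambda$ to obtain $|\loss(\blambda)-\ploss(\blambda)|\le 2\sqrt{\nu}$ with $\nu=\log(6|\bLambda|/\delta)$, holding with probability at least $1-\delta/3$. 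This forces a three-way split of $\delta$ (two invocations of \Cref{lem:gp-ucb-bound} plus the middle-term event), which is why the paper replaces $\phi_{t+1}$ by $\bar{\phi}_{t+1}=2\log(|\bLambda|t^2\pi^2/(2\delta))$ and arrives at the sensitivity bound $2(\sqrt{\bar{\phi}_{T+1}}+\sqrt{\nu})$ rather than your $2\sqrt{\phi_{T+1}}+c$. Your route is cleaner when a finite $\Delta_\ell$ is actually available (and it spares one probabilistic event, so only a two-way split of $\delta$ is needed); the paper's route avoids ever computing $\Delta_\ell$ for the fit score and keeps the whole argument inside the GP framework already assumed for \Cref{lem:gp-ucb-bound}.
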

For brevity, the proof of the above theorem is relegated to \Cref{app:proof} and we provide a brief sketch of the proof here.
\subsubsection*{Sketch of the proof}
In order to prove that the mechanism described in \Cref{alg:DP} is differentially private, we first establish a bound for the global sensitivity of the fit score function $\loss$ in \Cref{lem:sensitivity_bound} (see \Cref{app:proof}). Facilitated via \Cref{lem:gp-ucb-bound}, with a high probability of $1-\delta$ for a very small $\delta$, \Cref{lem:sensitivity_bound} establishes a bound for the global sensitivity of the BO estimate of the fit score function, $\mu_t(\cdot)$ at iteration $t>1$. Following this, using the definition of DP in \Cref{def:DP}, in \Cref{lem:exponential_mechanism_proof}, we prove that the publication of $\tblambda$ via the exponential mechanism in \cref{def:exponential_mechanism} with $\mu_t$ as the score function is $(\epsilon, \delta)$-DP for any pair of neighboring datasets $\bm{p}$ and $\bm{p}' \in \mathcal{X}$. 

\section{Numerical Results and Observations for Solar Metadata Inference}\label{sec:Numericals}
\begin{figure*}[!htbp]
    \centering
    \subfloat[][]{
        \includegraphics[width=0.3\textwidth]{./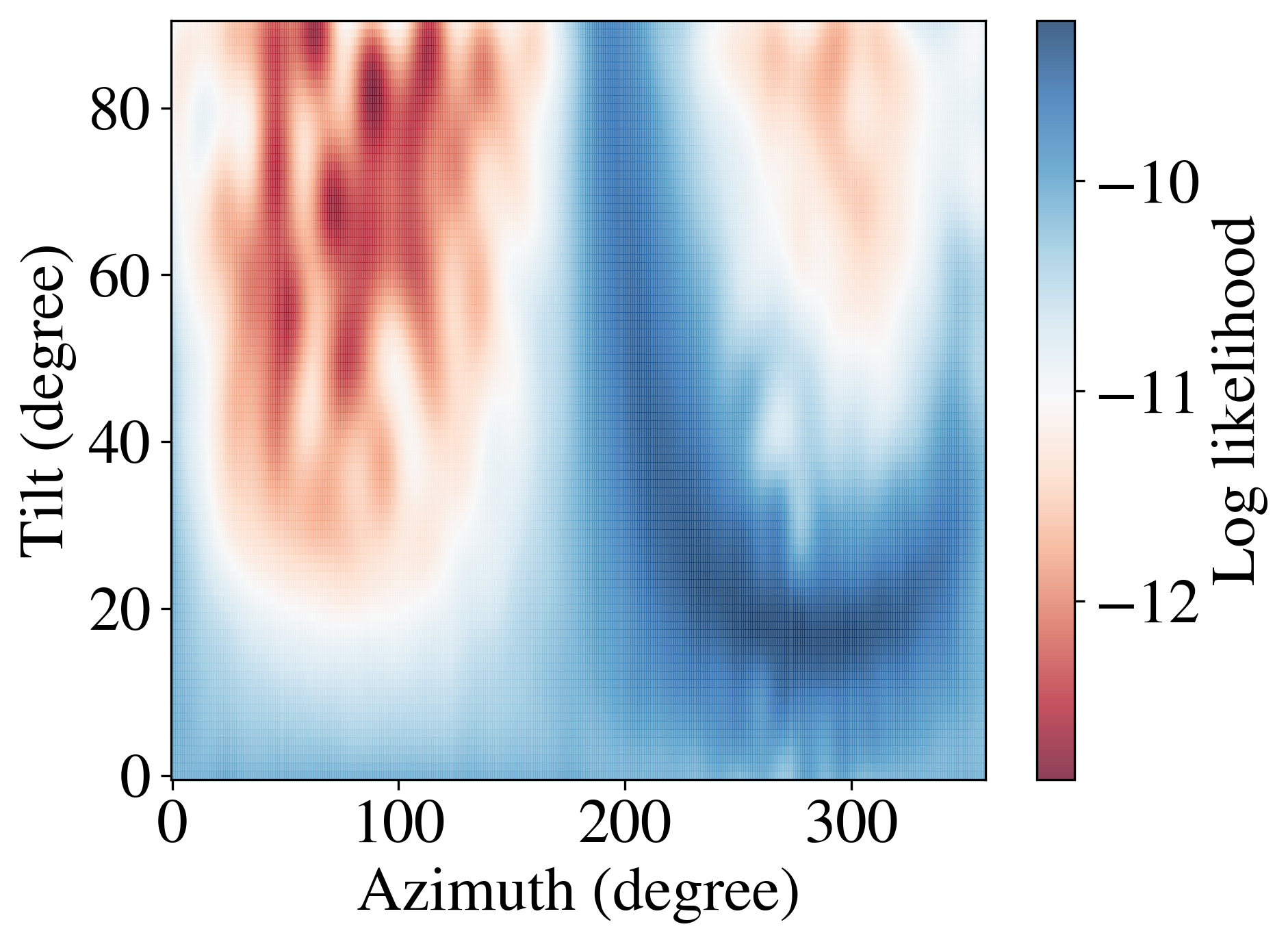}\label{fig:LLM}}\quad
    \subfloat[][]{
        \includegraphics[width=0.3\textwidth]{./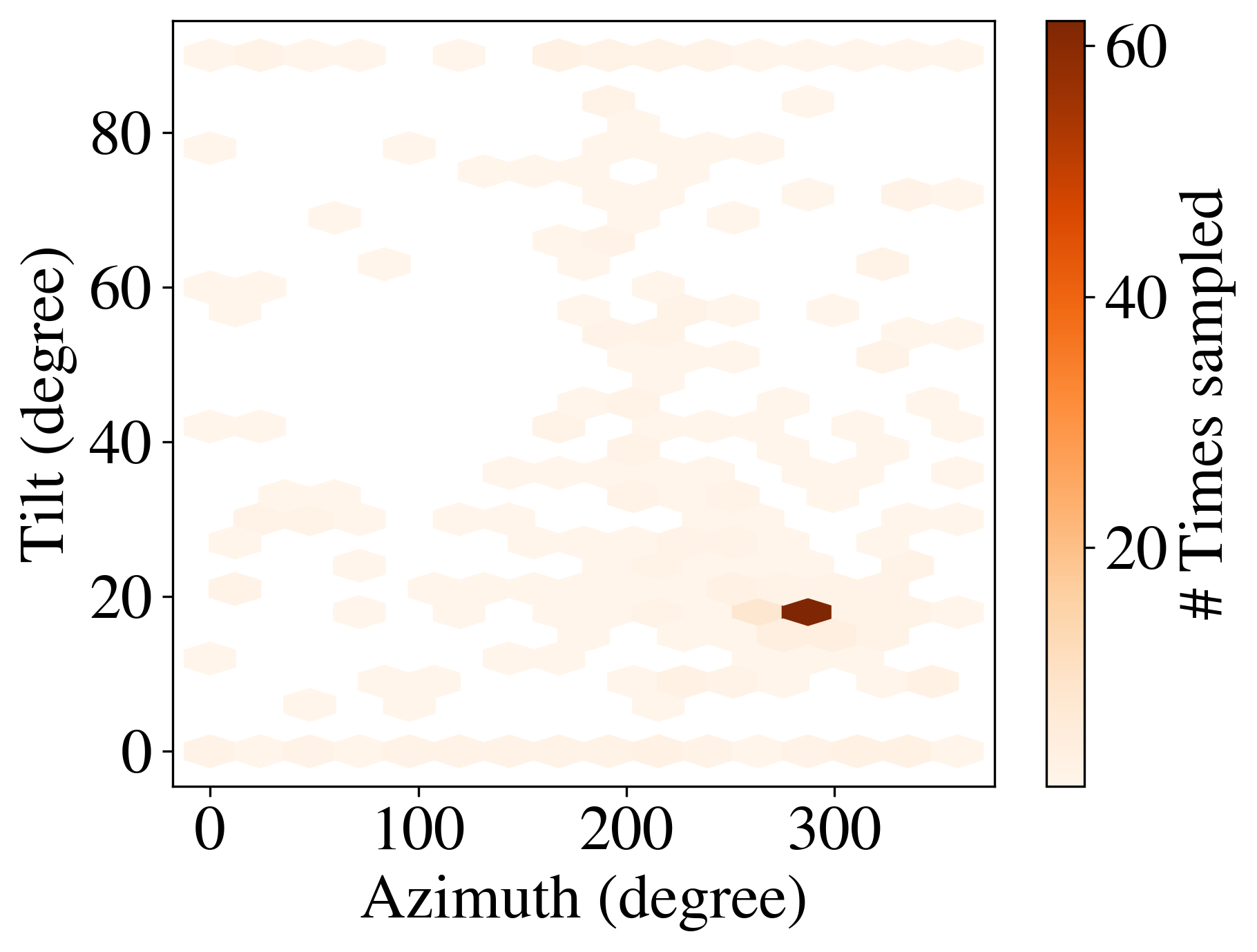}\label{fig:opt}}\quad
    \subfloat[][]{
        \includegraphics[width=0.3\textwidth]{./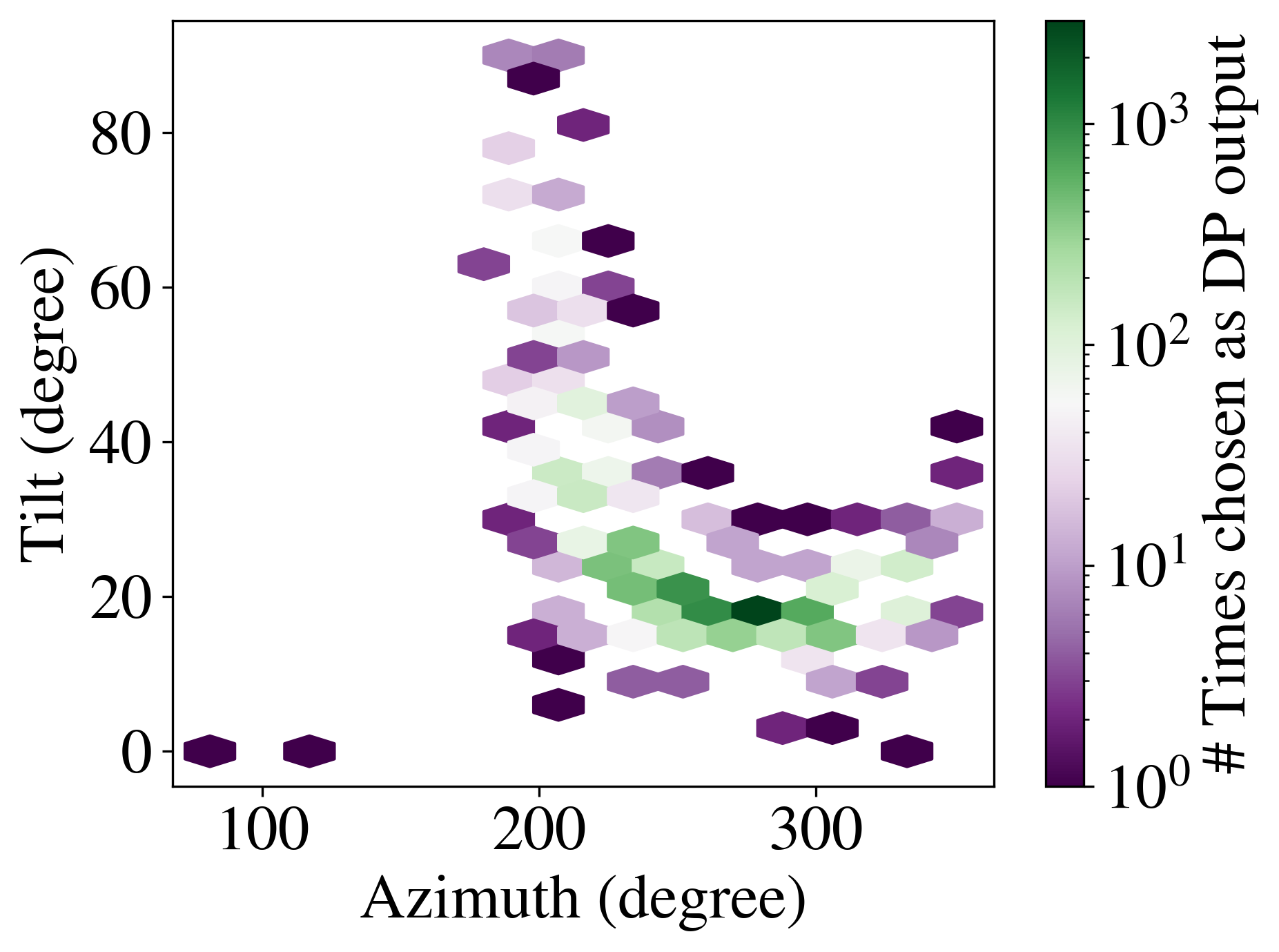}\label{fig:sample}}
    \caption{(\subref{fig:LLM}) Log likelihood of a particular $\blambda$ being the maximizer for $\epsilon=0.1$ and $\delta=0.1$.  (\subref{fig:opt}) The number of times a hexagonal region was sampled during the optimization stage. \subref{fig:sample}) The number of times (in $10,000$ draws) a hexagonal region was chosen to be the DP output according to Algorithm 1 are overlaid on top of the log-likelihood.}
    \label{fig:results}
\end{figure*}
\begin{figure}
    \centering
    \includegraphics{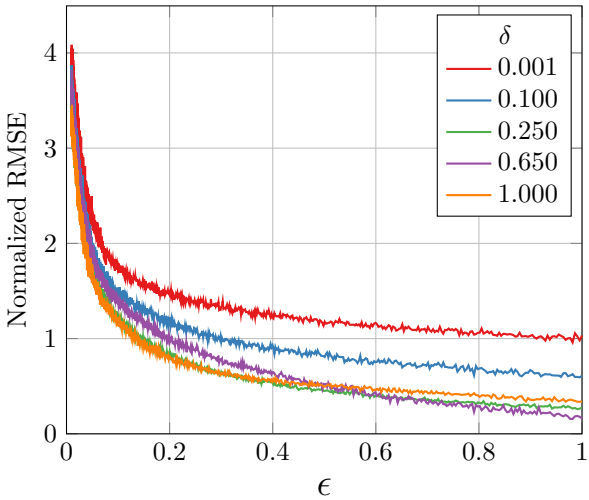}
    \caption{Normalized RMSE between the expected DP response and the optimal value of the metadata, $\sblambda$, for various values of $\epsilon$ and $\delta$. For each pair of $(\epsilon, \delta)$, $10,000$ samples were drawn using the mechanism to calculate the expected DP output.}
    \label{fig:dp}
\end{figure}
In this section, using hourly solar generation data from solar PV systems installed in California from the year $2016$, we numerically validate the DP-BO Solar PV metadata inference algorithm by showcasing the BO metadata inference step's accuracy and the DP mechanism's performance.

The performance validation of the solar metadata inference task is often impossible without the existence of labeled data. This dataset provided labeled surface azimuths for all the systems and surface tilts for some systems. Thus, the validation of the performance of the inference step is solely dependent on the error in the inferred surface azimuth. 

Some aspects of a solar PV system's data collection pipeline degrade the performance of the inference algorithm. For example, some systems in the dataset have solar panels from more than one brand and are often installed on different occasions. The ground truth of whether the various sub-systems all have the same configuration (tilt, azimuth, elevation, etc.) was not available. The systems with more than one known sub-system of PV panels report a single data stream of solar power generation, so it is difficult to infer the solar metadata information without a complicated disaggregation methodology in place. Finally, some installations contain a panel with the ability to track the sun over the course of a day. 

With these aspects in mind, we applied the DP-BO metadata inference mechanism to a solar PV system with a reported azimuth of $270\degree$ and tilt of $18\degree$, and its hourly solar generation profile from the year $2016$. 
The performance of the BO algorithm is measured in terms of its ability to hone in on the maximizer of the fit score function over the number of iterations of the algorithm. In \Cref{fig:LLM}, we show the posterior log-likelihood of azimuth and tilt pairs after $300$ iteration. It is evident that the algorithm places a high likelihood on the hexagonal segments in the neighborhood of $(270\degree, 18\degree)$. In \Cref{fig:opt}, we show the number of times in the $100$ samples, a hexagonal region was sampled. It is also evident that the GP-UCB acquisition function is performing well, as it has sampled the hexagonal segments around the true point the most. The point with the highest log likelihood (-0.31) after $T=100$ iteration is $\blambda=[$263\degree$, $18\degree$]$.

We next used the DP-BO mechanism $10,000$ on the same training run to output $10,000$ DP answers, $\tblambda$. After binning these output points into the aforementioned hexagonal segments, we plot the results in \Cref{fig:sample}. Again, it is evident that the DP-BO algorithm is able to consistently infer the metadata with high accuracy. \Cref{fig:dp}, which plots the normalized RMSE between the expected DP response and the optimal value of the metadata, $\sblambda$, for various values of $\epsilon$ and $\delta$, illustrates this point better. As expected, we can observe that the RMSE reduces as the privacy budget increases, and for a given $\epsilon$, the RMSE reduces as $\delta$ increases.

These experiments and the results provide a good validation for the proposed DP-BO algorithm to infer solar PV metadata.

\section{Conclusion}\label{sec:Conclusion}
In this paper, we discussed the growing solar PV infrastructure and its implication on grid operations and energy markets. To meet the needs of the grid operators, we proposed a methodology to infer solar PV metadata. Specifically, to perform the inference of solar PV metadata in a scalable, differentially private manner, we proposed the novel Bayesian optimization based DP algorithm that guarantees a desired level of $(\epsilon, \delta)$-DP. Additionally, in stark contrast to the industry standard 15/15 rule, the proposed methodology allows a utility to publish solar PV metadata to untrusted third-party analysts while protecting customer privacy with analytically provable privacy guarantees. Finally, we validated the proposed algorithm via numerical experiments. 

\bibliographystyle{ieeetr}
\bibliography{ref}

\appendices
\section{DP guarantees of DP-BO}\label[appendix]{app:proof}
To establish the DP guarantees of \Cref{alg:DP}, we need to first prove that the global sensitivity of the fit score function, $\loss$, is upper bounded with a high probability. This is stated and proved in the following lemma:
\begin{lemma}\label{lem:sensitivity_bound}
	The global sensitivity of the BO estimate of the fit score function is upper bounded with high probability, and this bound is given by:
	\begin{equation}
		|\mu_T'(\blambda) - \mu_T(\blambda)| \leq 2\left(\sqrt{\bar{\phi}_{T+1}} + \sqrt{\nu}\right),
	\end{equation}
	where $\bar{\phi}_{t} = 2\log(|\bLambda|t^2\pi^2/(2\delta))$ and $\nu = \log(6|\bLambda|/\delta)$.
\end{lemma}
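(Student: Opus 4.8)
The plan is to route the bound through the true fit-score functions of the two neighboring datasets and then use the triangle inequality. Write $\loss$ and $\ploss$ for the fit scores (\cref{eq:fit_score}) associated with $\bm{p}$ and $\bm{p}'$, and $\mu_T$, $\mu_T'$ for the corresponding GP-UCB posterior means after $T$ iterations. Then, for every $\blambda \in \bLambda$,
\[
|\mu_T'(\blambda) - \mu_T(\blambda)| \le |\mu_T(\blambda) - \loss(\blambda)| + |\loss(\blambda) - \ploss(\blambda)| + |\ploss(\blambda) - \mu_T'(\blambda)|.
\]
The first and last terms are exactly GP-UCB confidence widths; the middle term is the raw sensitivity of the fit score. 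Each will be controlled on a high-probability event.

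For the first and last terms I would invoke \Cref{lem:gp-ucb-bound} at index $T+1$, once for the run on $\bm{p}$ and once for the run on $\bm{p}'$, giving $|\loss(\blambda) - \mu_T(\blambda)| \le \sqrt{\phi_{T+1}}\,\sigma_T(\blambda)$ and the analogous inequality for $\bm{p}'$. Because the RBF kernel has $\Sigma(\blambda,\blambda)=1$ and the posterior variance in \cref{eq:posterior} satisfies $\sigma_T^2(\blambda) = 1 - \bm{\Sigma}(\blambda,\bLambda_T)^\top \bm{\Sigma}(\bLambda_T,\bLambda_T)^{-1}\bm{\Sigma}(\blambda,\bLambda_T) \le 1$ (the subtracted quadratic form is nonnegative by positive semidefiniteness), we may drop $\sigma_T$ and retain $\sqrt{\phi_{T+1}}$. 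Since the confidence statement must hold \emph{simultaneously} for both datasets (and for the event used below), I would split the failure probability $\delta$, which enlarges the threshold from $\phi_{T+1} = 2\log(|\bLambda|(T+1)^2\pi^2/(6\delta))$ to $\bar\phi_{T+1} = 2\log(|\bLambda|(T+1)^2\pi^2/(2\delta))$; these two terms then contribute $2\sqrt{\bar\phi_{T+1}}$.

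For the middle term I would bound $|\loss(\blambda) - \ploss(\blambda)| \le |\loss(\blambda)| + |\ploss(\blambda)|$ and control each raw function value separately. Since $\loss$ is a sample path of the zero-mean, unit-variance GP, $\loss(\blambda) \sim \mathcal{N}(0,1)$ marginally, so a Gaussian tail estimate combined with a union bound over the $|\bLambda|$ candidate points gives $|\loss(\blambda)| \le \sqrt{\nu}$ with $\nu = \log(6|\bLambda|/\delta)$ on the remaining slice of the budget, and identically for $\ploss$. (Alternatively one can exploit that normalization forces each per-group term $\loss_g$ into $[-1,0]$ and that neighboring $\bm{p},\bm{p}'$ perturb the prototypical profile of at most one group, which likewise caps the difference by a constant that is dominated by $2\sqrt{\nu}$.) Summing the three pieces yields
\[
|\mu_T'(\blambda) - \mu_T(\blambda)| \le 2\sqrt{\bar\phi_{T+1}} + 2\sqrt{\nu} = 2\bigl(\sqrt{\bar\phi_{T+1}} + \sqrt{\nu}\bigr)
\]
on an event of probability at least $1-\delta$, as claimed.

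The main obstacle is the probabilistic bookkeeping: apportioning the single budget $\delta$ across the three events — the GP-UCB intervals for $\bm{p}$ and for $\bm{p}'$, and the concentration bound on the raw fit-score values — so that the constants come out exactly as $\bar\phi_{T+1}$ and $\nu$, and checking that \Cref{lem:gp-ucb-bound} can legitimately be applied to the run on $\bm{p}'$ even though GP-UCB on $\bm{p}'$ explores a different query sequence $\bLambda_T'$ than on $\bm{p}$ (so that $\mu_T$ and $\mu_T'$ are built from different design points). The kernel-boundedness step $\sigma_T \le 1$, the triangle inequality, and the Gaussian tail estimate are otherwise routine.
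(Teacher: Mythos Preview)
Your proposal follows essentially the paper's own proof: the same triangle-inequality decomposition through $\loss$ and $\ploss$, the same invocation of \Cref{lem:gp-ucb-bound} on the two outer terms together with $\sigma_T\le 1$ from the kernel bound, and a Gaussian-tail plus union-bound argument for the middle term, followed by splitting $\delta$ into three equal pieces to obtain $\bar\phi_{T+1}$. The one substantive difference is in the middle term: the paper does \emph{not} pass through $|\loss(\blambda)|+|\ploss(\blambda)|$ but instead treats the difference itself as a single Gaussian, writing $(\ploss(\blambda)-\loss(\blambda))/\sqrt{2}\sim\mathcal N(0,1)$ and applying the sub-Gaussian tail bound directly; this is precisely what makes the constant come out as $\nu=\log(6|\bLambda|/\delta)$. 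Your separate-absolute-values route is also valid but costs you an extra union and a loss of the $\sqrt{2}$ variance gain, so the resulting $\nu$ would be larger (on the order of $2\log(12|\bLambda|/\delta)$); to recover the exact stated constant you should bound the difference in one shot as the paper does.
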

\begin{proof}
	Consider the global sensitivity of the BO estimate of the fit score function at time $t$:
	\begin{equation}
		\Delta_{\mu_t} := \max_{\bm{p},\bm{p}' \subseteq \mathcal{X}, \blambda \in \bLambda} |\mu_t'(\blambda) - \mu_t(\blambda)|,
	\end{equation}
	where $\mu_t(\cdot)$ and $\mu_t'(\cdot)$ are the mean function given in \cref{eq:BO_mu_t} when the underlying solar generation profile is $\bm{p}$ and $\bm{p}'$, respectively. The term $|\mu_t'(\blambda) - \mu_t(\blambda)|$ may be written as follows:
	\begin{align}
		 & |\mu_t'(\blambda) - \mu_t(\blambda)|\nonumber \\
		 & = |\mu_t'(\blambda) - \ploss(\blambda) + \ploss(\blambda) -\loss(\blambda) + \loss(\blambda)- \mu_t(\blambda)|\nonumber   \\
		 & \leq |\mu_t'(\blambda)\!-\!\ploss(\blambda)|\! +\! |\ploss(\blambda)\! -\!\! \loss(\blambda)| \!+\! |\loss(\blambda)\! -\! \mu_t(\blambda)|.
	\end{align}
	We know from \Cref{lem:gp-ucb-bound} that
	\begin{equation}
		Pr\{ |\loss(\blambda) - \mu_t(\blambda)| \leq \sqrt{\phi_{t+1}} \sigma_t(\blambda)\} \geq 1-\delta,\label{eq:gp-ucb-bound1}
	\end{equation}
	where $\phi_{t} = 2\log(|\bLambda|t^2\pi^2/6\delta)$,  $0<\delta\lll 1$, and $\sigma_t(\cdot)$ is the standard deviation of the BO estimate of $\loss$ at time $t$.
	This bound similarly holds for $|\ploss(\blambda) - \mu_t'(\blambda)|$ and it is given by:
	\begin{equation}
		Pr\{ |\ploss(\blambda) - \mu_t'(\blambda)| \leq \sqrt{\phi_{t+1}} \sigma_t'(\blambda)\} \geq 1-\delta,\label{eq:gp-ucb-bound2}
	\end{equation}
	where $\sigma_t'(\cdot)$ is the standard deviation of the BO estimate of $\loss$ at time $t$ with the underlying generation profile being $\bm{p}'$.

	In a similar vain, consider $\ploss(\blambda) - \loss(\blambda)$. From our assumptions in \Cref{sec:BayesOpt}, $\loss, \ploss \sim \mathcal{N}(0, \Sigma(\blambda, \blambda)) \equiv \mathcal{N}(0,1)$. Thus, we have: 
	\begin{equation}
		\left(\ploss(\blambda) - \loss(\blambda)\right)/\sqrt{2} \sim \mathcal{N}(0, 1),
	\end{equation}
	and consequently, from the properties of sub-Gaussian random variables, we have the following inequality for $\blambda \in \bLambda$, $\forall c\in \mathbb{R}$:
	\begin{equation}
		Pr\left[\left|\ploss(\blambda) - \loss(\blambda)\right| \geq \sqrt{2}c \right] \leq  2\exp\left(-c^2/2\right).
	\end{equation} 
	This can in turn be used to deduce an upper bound using the union bound as follows:
	\begin{equation}
		Pr\left[\bigcup_{\lambda \!\in \!\bLambda} \!\left\{ \!\left|\ploss(\blambda) \!-\! \loss(\blambda)\right| \!\leq\! \sqrt{2}c \!\right\} \!\right] \!\!\geq\!\! 1\! -\! 2|\bLambda|\exp(\!-\!c^2\!/2)
	\end{equation}
	Setting $\delta:=2|\bLambda|\exp\left(-\frac{c^2}{2}\right)$, we get $c=\sqrt{2\log(2|\bLambda|/\delta)}$ and the following upper bound with high probability:
	\begin{equation}
		Pr\left[\bigcup_{\lambda \in \bLambda} \left\{ \left|\ploss(\blambda) - \loss(\blambda)\right| \leq \sqrt{2}c \right\} \right] \geq 1 - \delta\label{eq:DP-bound}
	\end{equation}

	Finally, we can now bound $\Delta_{\mu_t}$ as follows:
	\begin{equation}
		Pr\left[|\mu_t'(\blambda) - \mu_t(\blambda)| \leq 2\left(\sqrt{\bar{\phi}_{t+1}} + \sqrt{\nu}\right) \right] \geq 1-\delta.
	\end{equation}
	where $\bar{\phi}_{t+1} := 2\log(|\bLambda|t^2\pi^2/2\delta)$ and $\nu := \log(6|\bLambda|/\delta)$. This inequality is obtained via the following steps. First, we apply the following transformations to \crefrange{eq:gp-ucb-bound1}{eq:gp-ucb-bound2} and \cref{eq:DP-bound}:
	\begin{subequations}\label{eq:bounds_simplified}
		\begin{align}
			&Pr\left[ |\loss(\blambda) - \mu_t(\blambda)| \leq \sqrt{\bar{\phi}_{t+1}}\right] \geq 1-\delta/3,\label{eq:gp-ucb-bound3}\\
			&Pr\left[ |\ploss(\blambda) - \mu_t'(\blambda)| \leq \sqrt{\bar{\phi}_{t+1}}\right] \geq 1-\delta/3,\label{eq:gp-ucb-bound4}\\
			&Pr\left[\bigcup_{\lambda \in \bLambda} \left\{ \left|\ploss(\blambda) - \loss(\blambda)\right| \leq 2\sqrt{\nu} \right\} \right] \geq 1 - \delta/3\label{eq:DP-bound2},
		\end{align}
	\end{subequations}
	where $\bar{\phi}_{t+1}:= 2\log(|\bLambda|t^2\pi^2/2\delta)$, $\nu:=\log(6|\bLambda|/\delta)$, and we used the assumption from \Cref{sec:BayesOpt_GP} -- that the kernel $\Sigma$ is bounded above by $1$ -- to bound $\sigma_t(\blambda)$ and $\sigma'_t(\blambda)$ from above by $1$.
	We next define the following events:
	\begin{subequations}\label{eq:events}
		\begin{align}
			E_1: |\loss(\blambda) - \mu_t(\blambda)| &\leq \sqrt{\bar{\phi}_{t+1}} \\
			E_2: |\ploss(\blambda) - \mu_t'(\blambda)| &\leq \sqrt{\bar{\phi}_{t+1}} \\
			E_3: \left|\ploss(\blambda) - \loss(\blambda)\right| &\leq 2\sqrt{\nu}\\
			E: |\mu_t'(\blambda) - \mu_t(\blambda)| &\leq 2\left(\sqrt{\bar{\phi}_{t+1}} + \sqrt{\nu}\right)
		\end{align}
	\end{subequations}
	Events $E_1, E_2$ and $E_3$ all satisfy the following bound:
	\begin{equation}
		Pr[E_i] \geq 1 - \delta/3, \quad \text{for } i \in \{1,2,3\}.
	\end{equation}
	By definition, it is clear that $\bigcap_{i=1}^3 E_i \subseteq E$. Thus:
	\begin{align}
		Pr[E] & \geq Pr\left[\bigcap_{i=1}^3 E_i\right] = 1 - Pr\left[\bigcup_{i=1}^3 E_i^c\right] \geq 1 - \delta.\label{eq:Event-E}
	\end{align}
\end{proof}
It is now left to prove that if the above lemma holds, then \Cref{alg:DP} is $(\epsilon, \delta)$-DP.
\begin{lemma}[\Cref{alg:DP} is $(\epsilon, \delta)$-DP]\label{lem:exponential_mechanism_proof}
	Let $\bblambda_{\ell}$ denote the mechanism presented in \Cref{alg:DP}. If \Cref{lem:sensitivity_bound} holds, then the outcome, $\tblambda$, of $\bblambda_{\ell}$ is $(\epsilon, \delta)$-DP for any pair of neighboring datasets $\bm{p}$ and $\bm{p}' \in \mathcal{X}$.
\end{lemma}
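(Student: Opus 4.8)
The plan is to read the final line of \cref{alg:DP} as an instance of the exponential mechanism and then charge the failure probability of \Cref{lem:sensitivity_bound} to the additive $\delta$ slack of \Cref{def:DP}. First I would observe that the map $\bm{p}\mapsto\mu_T(\cdot)$ — running the $T$ GP--UCB iterations of \cref{alg:DP} and forming the posterior mean \cref{eq:BO_mu_t} — is a \emph{deterministic} function of the data (the acquisition step is an $\arg\max$, the observed scores are the true $\loss_t$, and the posterior update is closed-form), so the only randomized step of $\bblambda_{\ell}$ is the draw of $\tblambda$. Hence it suffices to show that this draw, viewed as a function of $\bm{p}$ through $\mu_T$, is $(\epsilon,\delta)$-DP. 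Writing $\hat{\Delta}:=2\big(\sqrt{\bar{\phi}_{T+1}}+\sqrt{\nu}\big)$ for the high-probability bound on the sensitivity of $\mu_T$ supplied by \Cref{lem:sensitivity_bound}, the draw selects $\blambda$ with probability proportional to $\exp\!\big(\epsilon\,\mu_T(\blambda)/(2\hat{\Delta})\big)$ against a uniform prior on $\bLambda$, i.e.\ it is exactly the exponential mechanism of \Cref{def:exponential_mechanism} with score function $\mu_T$ and sensitivity parameter $\hat{\Delta}$.

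Second, I would isolate the event on which the mechanism is a well-calibrated exponential mechanism. Let $E$ be the event — under the Gaussian-process prior on $(\loss,\ploss)$ used in \Cref{lem:gp-ucb-bound} and \Cref{lem:sensitivity_bound} — that $\sup_{\blambda\in\bLambda}|\mu_T(\blambda)-\mu_T'(\blambda)|\le\hat{\Delta}$, with $\mu_T'$ the posterior mean produced from the neighbor $\bm{p}'$; by \Cref{lem:sensitivity_bound}, $Pr[E]\ge 1-\delta$. On $E$ the sensitivity parameter used by the mechanism dominates the realized sensitivity of $\mu_T$ across $(\bm{p},\bm{p}')$, so the McSherry--Talwar likelihood-ratio computation behind \Cref{thm:exp_mechanism_is_eDP} applies verbatim and yields, for every output set $\mathcal{A}\subseteq\bLambda$,
\begin{equation}
	Pr[\tbblambda(\bm{p}) \in \mathcal{A} \mid E] \;\le\; e^{\epsilon}\, Pr[\tbblambda(\bm{p}') \in \mathcal{A} \mid E].
\end{equation}
It is worth noting that the neighbors may drive the BO loop along entirely different adaptive trajectories $\bLambda_T\neq\bLambda_T'$; this does no harm because the GP--UCB confidence band of \Cref{lem:gp-ucb-bound} is uniform over $\bLambda$ for \emph{any} realized trajectory, so the triangle-inequality decomposition in \Cref{lem:sensitivity_bound} still produces the single bound $\hat{\Delta}$.

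Third, I would absorb $E^{c}$ into the additive term. Letting $\mathsf{P}_{\bm{p}}(\mathcal{A})\in[0,1]$ denote the exponential-mechanism output probability given the realized $\mu_T$,
\begin{align}
	Pr[\tbblambda(\bm{p})\in\mathcal{A}] &= \mathbb{E}\!\big[\mathsf{P}_{\bm{p}}(\mathcal{A})\,\mathbf{1}_{E}\big] + \mathbb{E}\!\big[\mathsf{P}_{\bm{p}}(\mathcal{A})\,\mathbf{1}_{E^{c}}\big] \nonumber\\
	&\le e^{\epsilon}\,\mathbb{E}\!\big[\mathsf{P}_{\bm{p}'}(\mathcal{A})\,\mathbf{1}_{E}\big] + Pr[E^{c}] \nonumber\\
	&\le e^{\epsilon}\,Pr[\tbblambda(\bm{p}')\in\mathcal{A}] + \delta ,
\end{align}
where the first inequality uses the previous display on $E$ together with $\mathsf{P}_{\bm{p}}(\mathcal{A})\le 1$ on $E^{c}$, and the second uses $Pr[E^{c}]\le\delta$. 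This is the defining inequality of \Cref{def:DP}; since the confidence and sub-Gaussian bounds underlying \Cref{lem:sensitivity_bound} are uniform over $\bLambda$ and do not single out $\bm{p}$ or $\bm{p}'$, it holds for every neighboring pair, so $\bblambda_{\ell}$ is $(\epsilon,\delta)$-DP.

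The step I expect to be the main obstacle is the third one: converting a guarantee that is only $\epsilon$-DP \emph{conditional} on the high-probability event of \Cref{lem:sensitivity_bound} into an unconditional $(\epsilon,\delta)$-DP statement. The care needed is (i) to pay the bad event \emph{globally} via $Pr[E^{c}]\le\delta$ rather than through a per-atom union bound over $\bLambda$, which would inflate the slack by a factor $|\bLambda|$, and (ii) to define $E$ jointly over the pair $(\bm{p},\bm{p}')$ so that conditioning on it leaves the exponential-mechanism likelihood ratio intact. A more mechanical loose end is reconciling constants — checking that the temperature $2(2\sqrt{\phi_{T+1}}+c)$ actually hard-coded in \cref{alg:DP} is at least $2\hat{\Delta}$, and that the $\delta$ that \Cref{lem:sensitivity_bound} splits three ways internally is the same $\delta$ as in the statement.
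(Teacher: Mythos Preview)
Your proposal is correct and follows essentially the same route as the paper: identify the final draw as an exponential mechanism with score $\mu_T$ and sensitivity parameter $\hat{\Delta}$, obtain conditional $\epsilon$-DP on the high-probability event $E$ from \Cref{lem:sensitivity_bound}, and then absorb $Pr[E^{c}]\le\delta$ into the additive slack via the law of total probability. The paper's proof is terser and works with singleton outputs rather than sets, but the decomposition and the use of \Cref{thm:exp_mechanism_is_eDP} on $E$ are identical; your flagged loose end about reconciling the temperature $2(2\sqrt{\phi_{T+1}}+c)$ in \Cref{alg:DP} with $2\hat{\Delta}=4(\sqrt{\bar{\phi}_{T+1}}+\sqrt{\nu})$ is genuine and is glossed over in the paper as well.
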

\begin{proof}
	If $\bm{p}, \bm{p}'$ are two neighboring datasets in $\mathcal{X}$ and the sensitivity bound in \Cref{lem:sensitivity_bound} holds, then choosing $\tblambda$ with probability proportional to $\exp\left\{\frac{\epsilon\mu_T(\blambda)}{4\left(\sqrt{\bar{\phi}_{T+1}} + \sqrt{\nu}\right)}\right\}$ is $\epsilon$-DP from \Cref{thm:exp_mechanism_is_eDP}, i.e.,
	\begin{equation}
		Pr[\bblambda_{\loss} = \tblambda\mid E] \leq e^\epsilon Pr[\bblambda_{\ploss} = \tblambda\mid E].\label{eq:exponential_eps}
	\end{equation}
	Furthermore, considering $Pr[\bblambda_{\loss} = \tblambda]$, we have:
	\begin{align}
		 &Pr[\bblambda_{\loss} = \tblambda] \nonumber\\
      &= Pr[\bblambda_{\loss}=\tblambda\mid E] Pr[E] + Pr[\bblambda_{\loss}=\tblambda\mid\overline{E}] (1 - Pr[E])\nonumber\\
	   &\leq  Pr[\bblambda_{\loss}=\tblambda\mid E] Pr[E] + 1 \times \delta\nonumber\\
		 &\leq e^\epsilon Pr[\bblambda_{\ploss}=\tblambda\mid E] Pr[E] + \delta  \nonumber\\
      &\leq e^\epsilon Pr[\bblambda_{\ploss}=\tblambda] + \delta\nonumber,
	\end{align}
	where the first and the second inequalities follow from \cref{eq:Event-E} and \cref{eq:exponential_eps}, respectively.
\end{proof}%
This completes the proof of \Cref{thm:exp_mechanism_is_eDP}.
\end{document}